\documentclass[11pt]{article}
\usepackage{amsfonts}
\usepackage{amsmath}
\usepackage{amssymb}
\usepackage{graphicx}
\usepackage{algorithm}
\usepackage{algorithmic}
\usepackage{cases}
\usepackage{epstopdf}
\usepackage{geometry, tabularx}
\usepackage{multirow,makecell}
\usepackage{enumerate}
\usepackage{bm}
\usepackage{pifont}
\usepackage{footnote}
\usepackage{titlesec}
\usepackage[titletoc]{appendix}
\usepackage{setspace}
\usepackage[sectionbib]{natbib}%\usepackage[numbers]{natbib}

\usepackage{mathrsfs}
\usepackage{threeparttable}

\setcounter{MaxMatrixCols}{30}
\providecommand{\U}[1]{\protect\rule{.1in}{.1in}}

\newtheorem{thm}{\bf Theorem}      % Defines the environments for Theorems,
     % Corollaries, Lemmas, etc.
\newtheorem{lem}{\bf Lemma}
\newtheorem{prop}{\bf Proposition}

\newtheorem{rem}{\bf Remark}

\usepackage{hyperref}
\hypersetup{hypertex=true,
            colorlinks=true,
            linkcolor=blue,
            anchorcolor=green,
            citecolor=blue}

\usepackage{graphicx,color}
\usepackage[bf,SL,BF]{subfigure}

\newenvironment{proof}[1][Proof]{\noindent\textbf{#1.} }{\ \rule{0.5em}{0.5em}}
\geometry{headsep=15pt}
\normalsize\setlength{\parskip}{1em}
\setlength{\oddsidemargin}{20mm} \setlength{\evensidemargin}{20mm}
\setlength{\voffset}{-1in} \setlength{\hoffset}{-1in}
\setlength{\textwidth}{175mm} \setlength{\topmargin}{0mm}
\setlength{\headheight}{15mm} \setlength{\headsep}{10mm}
\setlength{\topskip}{0mm} \setlength{\textheight}{225mm}
\titlespacing*{\section} {0pt}{9pt}{0pt}
\numberwithin{equation}{section}
\allowdisplaybreaks

\usepackage{booktabs}
\usepackage{diagbox}
\onehalfspacing%\doublespacing

\begin{document}
\normalsize
\title{Parallel ADMM Algorithm with Gaussian Back Substitution for High-Dimensional Quantile Regression and Classification}

\author{ Xiaofei Wu\thanks{College of Mathematics and Statistics, Chongqing University, Chongqing, 401331, P.R. China. Email: xfwu1016@163.com}, \ \ 
        Dingzi Guo \thanks{Guo and Wu have contributed equally to this paper. Institute of Financial Studies, Shandong University, Shandong, 250100, P.R. China. Email: guodingzi@mail.sdu.edu.cn.},\\  
        Rongmei Liang \thanks{Department of Statistics and Data Science, Southern University of Science and Technology, Guangdong, 518055, P.R. China. Email: liang\_r\_m@163.com},\\ 
       Zhimin Zhang\thanks{Corresponding author. College of Mathematics and Statistics, Chongqing University, Chongqing, 401331, P.R. China. Email: zmzhang@cqu.edu.cn}.
}
\date{}
\maketitle
%\vspace{-.40in}

\begin{abstract}
In the field of high-dimensional data analysis, modeling methods based on quantile loss function are highly regarded due to their ability to provide a comprehensive statistical perspective and effective handling of heterogeneous data. In recent years, many studies have focused on using the parallel alternating direction method of multipliers (P-ADMM) to solve high-dimensional quantile regression and classification problems. One efficient strategy is to reformulate the quantile loss function by introducing slack variables. However, this reformulation introduces a theoretical challenge: even when the regularization term is convex, the convergence of the algorithm cannot be guaranteed. To address this challenge, this paper proposes the Gaussian Back-Substitution strategy, which requires only a simple and effective correction step that can be easily integrated into existing parallel algorithm frameworks, achieving a linear convergence rate. Furthermore, this paper extends the parallel algorithm to handle some novel quantile loss classification models. Numerical simulations demonstrate that the proposed modified P-ADMM algorithm exhibits excellent performance in terms of reliability and efficiency.
\end{abstract}

\textbf{Keywords:}   Three-block ADMM;   Gaussian back substitution; Massive data; Parallel algorithm
\section{Introduction}
Quantile regression, pioneered by \cite{Koenker1978Regressions}, explores how a response variable depends on a set of predictors by modeling the conditional quantile as a function of these predictors. Unlike mean regression, which focuses solely on estimating the conditional mean of the response, quantile regression offers a more precise representation of the relationship between the response and the predictors.  Furthermore, quantile regression exhibits superior robustness when handling datasets with heterogeneous characteristics and can effectively process data with heavy-tailed distributions, owing to its less restrictive assumptions regarding error distribution.  Quantile loss is also utilized in support vector machines (SVM) for classification purposes (see  Section 9.3 in \cite{Christmann2008Support} and Proposition 1 in \cite{Wu2025Multi}). Compared to traditional SVM (support vector machine) in \cite{Vapnik1995The}, quantile loss SVM has been shown to be less sensitive to noise around the separating hyperplane, making it more robust to resampling. For a detailed discussion, see \cite{Huang2014Support}.

Consider a regression or classification  problem with \(n\) observations of the form 
\begin{equation}\label{data}
\{ {y_i}, \boldsymbol{x}_i \}_{i=1}^n = \{ {y_i}, {x_{i,1}}, {x_{i,2}}, \ldots, {x_{i,p}} \}_{i=1}^n = \{\boldsymbol{y}, \boldsymbol{X}\} = \boldsymbol{D},
\end{equation}
where the data is assumed to be a random sample from an unknown joint distribution with a probability density function. The random variable \(y\) represents the ``response" or ``outcome", while \(\boldsymbol{x} = \{x_1, x_2, \dots, x_p\}\) denotes the predictor variables (features). These features may include the original observations and/or functions derived from them. If considering the intercept term, the first column of $\bm X$ is set to all 1. Without loss of generality, ${y}_i$ is quantitative for regression models, but equals to -1 or 1 for classification models.  For a given quantile $\tau \in (0,1)$, one can obtain an estimate of quantile regression by optimizing the following objective function,
\begin{equation*}
\hat{\bm \beta}(\tau) = \arg \min_{\bm \beta} \sum_{i=1}^{n} \rho_{\tau} \left( y_i - \boldsymbol{x}_i^\top \bm \beta \right),
\end{equation*}
where \(\rho_{\tau}(u) = u [\tau - I(u < 0)]\) (for \(u \in \mathbb{R}\)) is the check loss, and \(I(\cdot)\) is the indicator function.  Note that when $\tau = 1$, the quantile loss degenerates into hinge loss in SVM for classification. Many studies have shown that quantile loss  can also be used for classification when $\tau \in (0,1)$, such as \cite{Christmann2008Support}, \cite{Huang2014Support}, \cite{Liang2024Linearized} and \cite{Wu2025Multi}.
The estimator for quantile loss SVM can be obtained by solving the following optimization problem,
\begin{align*}
\arg \min_{ \bm \beta} \left\{ \sum_{i=1}^{n} \rho_{\tau} \left( 1 - y_i \boldsymbol{x}_i^\top \bm \beta  \right) +  \| \bm \lambda \odot \bm \beta\|_2^2 \right\},
\end{align*}
where $\bm \lambda$ is a penalization parameter vector, where the first element is always set to 0, and the remaining elements are constrained to be non-negative. Here, the symbol $\odot$ is commonly used to denote the Hadamard product, also known as the element-wise product or the entrywise product.   Clearly, when $\tau = 1$, $\rho_{\tau}$ degenerates into the hinge loss commonly used in SVM, and the above expression simplifies to the form of the ordinary SVM, specifically, hinge loss plus ridge penalty term.

To accommodate high-dimensional scenarios where \(p > n\), it is common practice to substitute the ridge term \(\| \bm \lambda \odot \bm \beta\|_2^2 \) (\cite{Hoerl1970Ridge}) with sparse regularization techniques, including LASSO in \cite{Tibshirani1996Regression}, elastic net in \cite{Zou2006The}, adaptive LASSO in \cite{Zou2006The}, SCAD in \cite{Fan2001Variable}, and MCP in \cite{Zhang2010Nearly}.   Consequently, the sparse penalized quantile  regression and classification formulation becomes
\begin{align}\label{pqr}
\hat{\bm \beta}(\tau) = \arg \min_{\bm \beta} \left\{ \sum_{i=1}^{n} \rho_{\tau} \left( \tilde{y_i} - \tilde{\boldsymbol{x}}_i^\top \bm \beta \right) + P_{\bm \lambda} (|\bm \beta|) \right\}.
\end{align}
Here, \(P_{\bm \lambda}(|\bm \beta|)\) denotes a sparse regularization term that is separable, meaning:
\[
P_{\bm \lambda}(|\bm \beta|) = \sum_{j=1}^{p} P_{\lambda_j}(|\beta_j|).
\]
This formulation allows for a distinct penalization of each component \(\beta_j\) using its respective regularization parameter \(\lambda_j\).  For  regression,   $\tilde{y}_i$ is $y_i$, $\tilde{\boldsymbol{x}}_i$ is ${\boldsymbol{x}}_i$;    for classification, $\tilde{y}_i$ is 1;  $\tilde{\boldsymbol{x}}_i$ is $y_i{\boldsymbol{x}}_i$.  Optimization formula (\ref{pqr}) is a highly flexible expression hat can represent numerous quantile regression and SVM classification models, including penalized quantile regression (\cite{Belloni2011L1}, \cite{Wang2012Quantile}, \cite{Fan2012Adaptive} and \cite{Gu2018ADMM}) and SVM with sparse regularizationin (\cite{Zhu20031-norm}, \cite{Wang2006The}, \cite{Zhang2016Variable} and \cite{Liang2024Linearized})

The advancement of modern science and technology has made data collection increasingly effortless, leading to an explosion of variables and vast amounts of data. Due to the sheer volume of data and other factors such as privacy concerns, it has become essential to store it in a distributed manner. Consequently, designing parallel algorithms that can effectively manage these large and distributed datasets is crucial.  
Several parallel algorithms have been proposed to address problem \eqref{pqr}, including QR-ADMM (ADMM for quantile regression) in \cite{Yu2017ADMM}, QPADM (quantile regression with parallel ADMM) in \cite{Yu2017A} and \cite{Wu2025A}. More recently, inspired by \cite{Guan2018An}, \cite{Fan2021Penalized} introduced a slack variable representation of the quantile regression problem, demonstrating that this new formulation is significantly faster than QPADM, especially when the data volume \(n\) or the dimensionality \(p\) is large. In addition, the slack variable representation is also used by \cite{Guan2020An} to design parallel algorithms for solving SVMs with sparse regularization.

However,  these slack variable representations raise a new issue, which is that convergence cannot be proven mathematically. The main reason for this is that the slack variable representation introduces two slack variables, which transform the parallel ADMM algorithms (both distributed and non-distributed) into a three-block ADMM algorithm, see Section \ref{sec31} for  detailed information. \cite{Chen2016The}  demonstrated that directly extending the ADMM algorithm for convex optimization with three or more separable blocks may not guarantee convergence, and they even provided an example of divergence. Therefore, although the parallel ADMM algorithms  proposed by \cite{Guan2020An} and \cite{Fan2021Penalized} did not  exhibit non-convergence of iterative solutions in numerical experiments, there is no theoretical guarantee of convergence for the iterative solutions, even when the optimization objective is convex.

In this paper, we apply the Gaussian back substitution technique to refine the iterative steps, which allows the parallel ADMM algorithm proposed by \cite{Guan2020An} and \cite{Fan2021Penalized}  to achieve a linear convergence rate. This Gaussian back substitution technique is straightforward and easy to implement, requiring only a linear operation on a portion of the iterative sequences generated by their algorithm.
Besides demonstrating that the algorithm in   \cite{Guan2020An} and \cite{Fan2021Penalized} can theoretically guarantee convergence with a simple adjustment, the main contributions of this paper are as follows:
\begin{enumerate}
\item We suggest changing the order of variable iteration in \cite{Fan2021Penalized} such that our Gaussian back substitution technique involves only simple vector additions and subtractions, thereby eliminating the need for matrix-vector multiplication. 
 Although this change may seem minor, it can significantly enhance computational efficiency in algorithms where both \( n \) and \( p \) are relatively large. More importantly, this change will not impact the linear convergence of the algorithm.

\item  This paper proposes some new classification models with nonconvex regularization terms based on quantile loss. Leveraging the equivalence of quantile loss in classification and regression tasks, it indicates that existing parallel ADMM algorithms for solving penalized quantile regression, as well as those proposed in this paper, can also be applied to solve these new classification models.
\end{enumerate}

The remainder of this paper is organized as follows. Section \ref{sec2} provides a review of relevant literature along with an introduction to preliminary knowledge. Section \ref{sec3} outlines the existing QPADM-slack algorithm, incorporating adjustments to the Gaussian back-substitution process to achieve linear convergence. Section \ref{sec4} proposes modifications to the variable update sequence in QPADM-slack, simplifying the correction steps during Gaussian back-substitution. Numerical experiments in Section \ref{sec45} demonstrate that Gaussian back-substitution not only theoretically guarantees the linear convergence of QPADM-slack but also significantly improves computational efficiency and accuracy. Section \ref{sec46} summarizes the key findings of the study and identifies avenues for future research. The proofs of the theorems and supplementary experimental results are included in the online appendix.

\textbf{Notations}:
$\bm 0_n$ and $\bm 1_n$ represent $n$-dimensional vectors with all elements being 0 and 1, respectively.
$\bm F$ is a $(p-1) \times p$ matrix with all elements being 0, except for 1 on the diagonal and -1 on the superdiagonal.
$\bm I_n$ represents the $n$-dimensional identity matrix.
The Hadamard product is denoted by $\odot$.
The sign$(\cdot)$ function is defined component-wise such that sign$(t) = 1$ if $t > 0$, sign$(t) = 0$ if $t = 0$, and sign$(t) = -1$ if $t < 0$.
$(\cdot)_+$ signifies the element-wise operation of extracting the positive component, while $|\cdot|$ denotes the element-wise absolute value function.
For any vector $\bm u$, $\|\bm u\|_1$  and $\|\bm u\|_2$ denote the $\ell_1$ norm and the $\ell_2$ norm of $\bm u$, respectively.
$\|\bm u\|_{\bm H} := \sqrt{\bm u^\top \bm H \bm u}$ is used to denote the norm of $\bm u$ under the matrix $\bm H$, where $\bm H$ is a matrix.

\section{Preliminaries and Literature Review}\label{sec2}
 A traditional method for solving penalized quantile regression and SVMs is to transform the corresponding optimization problem into a linear program (see \cite{Zhu20031-norm}, \cite{Koenker2005Quantile}, \cite{Wu2009Variable},  \cite{Wang2012Quantile} and \cite{Zhang2016Variable}), which can then be solved using many existing optimization methods. \cite{Koenker2005AFA} proposed an interior-point method for quantile regression and penalized quantile regression. \cite{Hastie2004The}, \cite{Rosset2007Piecewise} and \cite{Li2008L1} introduced an algorithm for computing the solution path of the LASSO-penalized quantile regression and SVM, building on the LARS approach (\cite{Efron2004Least}).
However, these linear programming algorithms are known to be inefficient for high-dimensional quantile regressions and SVMs.  

Although gradient descent algorithms (\cite{Beck2009A}, \cite{Yang2015A}) and coordinate descent methods (\cite{Friedman2010Regularization}, \cite{Yang2013An}) are efficient in solving penalized smooth regression and classification problems (including some smooth SVMs), they cannot be directly applied to penalized quantile regressions and SVMs due to the non-smooth nature of quantile loss.
To extend the coordinate descent method to nonsmooth loss regressions, \cite{Peng2015An} integrated the majorization-minimization algorithm with the coordinate descent algorithm to develop an iterative coordinate descent algorithm (QICD) for solving sparse penalized quantile regression. \cite{Yi2016Semismooth} introduced a coordinate descent algorithm for penalized Huber regression and utilized it to approximate penalized quantile regression. However, these algorithms are not suitable for distributed storage and are not easy to implement in parallel.

An efficient algorithm for solving penalized quantile regressions and SVMs is the alternating direction method of multipliers (ADMM), which, owing to its split structure, is well-suited for parallel computing environments. In the following  subsection, we will review these ADMM algorithms and their parallel versions.

\subsection{ADMM}
The alternating direction method of multipliers (ADMM) is an iterative optimization method designed to solve complex convex minimization problems with linear constraints. It works by breaking down the original problem into smaller, more manageable subproblems that are easier to solve. ADMM alternates between optimizing these subproblems and updating dual variables to enforce the constraints, making it particularly useful for large-scale problems and various statistical learning applications (see \cite{Boyd2010Distributed} for more details).
Since the quantile loss function \( \rho_\tau(\cdot) \) is nonsmooth and nondifferentiable, it is necessary to introduce the linear constraint \( \bm{r} = (r_1, r_2, \dots, r_n)^\top = \tilde{\bm{y}} - \tilde{\bm{X}}\bm{\beta} \) to apply ADMM for solving problem (\ref{pqr}). To better address the subproblem involving \( \bm{\beta} \),  one can introduce the equality constraint \( \bm{z} = \bm{\beta} \). Consequently, the constrained optimization problem can be formulated as follows,
\begin{align}\label{intr1}
\min_{\bm \beta, \bm r, \bm z} & \quad  \rho_\tau (\bm r)  + {P}_\lambda(|\bm \beta|
), \notag \\
\text{s.t.} \ &  \tilde{\bm y} -  \tilde{\bm{X}} \bm z      =  \bm r ,  \ \bm{z} = \bm{\beta},
\end{align} 
where $\rho_\tau (\bm r) = \sum_{i=1}^{n} \rho (r_i)$.
The augmented Lagrangian form of (\ref{intr1}) is
\begin{align}
L_\mu(\bm \beta, \bm r, \bm z, \bm d_1, \bm e_1) &= 
\rho_\tau(\bm r)  + {P}_\lambda(|\bm \beta|)  - \bm d_1^\top (  \tilde{\bm y} - \tilde{\bm{X}} \bm z   - \bm r ) +  \frac{\mu}{2}  \|   \tilde{\bm y} - \tilde{\bm{X}} \bm z     - \bm r\|_2^2 \notag \\
& - \bm e_1^\top (\bm z - \bm \beta) +  \frac{\mu}{2} \|\bm z - \bm \beta \|_2^2,
\end{align}
where $\bm d_1$  and $\bm e_1$ are dual variables corresponding to the linear constraints, and $\mu>0$ is a given  augmented parameter.
Given \((\bm{r}^0, \bm{z}^0, \bm{d}_1^0, \bm{e}_1^0)\), the iterative scheme of ADMM for problem (\ref{intr1}) is as follows,
\begin{equation}\label{twoupadmm1}
\left\{ \begin{array}{l}
 \bm \beta^{k+1} \ \leftarrow  \mathop {\arg \min }\limits_{\bm \beta} \left\{ L_\mu(\bm \beta, \bm r^k, \bm z^k, \bm d_1^k, \bm e_1^k)  \right \};\\
\bm r^{k+1} \ \leftarrow  \mathop {\arg \min }\limits_{\bm r} \left\{L_\mu(\bm \beta^{k+1}, \bm r,  \bm z^k, \bm d_1^k, \bm e_1^k)  \right \}; \\
\bm z^{k+1} \ \leftarrow  \mathop {\arg \min }\limits_{\bm z} \left\{L_\mu(\bm \beta^{k+1}, \bm r^{k+1}, \bm z,  \bm d_1^k, \bm e_1^k)  \right \}; \\
\bm d_1^{k+1} \ \leftarrow  \bm{d}_1^{k} - \mu( \tilde{\bm y} - \tilde{\bm X} \bm z^{k+1}  - \bm r^{k+1}); \\
\bm e_1^{k+1} \ \leftarrow  \bm{e}_1^{k} - \mu(\bm z^{k+1} - \bm \beta^{k+1}).
\end{array} \right.
\end{equation}
The entire iterative process of (\ref{twoupadmm1})   is the non parallel QPADM algorithm proposed by  \cite{Yu2017A} and \cite{Wu2024Multi}.
On the other hand, \cite{Gu2018ADMM} and \cite{Wu2025Multi} introduced the linearized ADMM algorithms for solving penalized quantile regression and SVM. These methods do not introduce an auxiliary variable \(\bm{z}\), and instead linearize the quadratic function in the \(\bm{\beta}\)-subproblem to facilitate finding the solution for \(\bm{\beta}\). However, their did not extend the algorithms to handle distributed storage data, meaning there is no parallel version of their ADMM algorithms. Therefore, we will focus on discussing the parallel QPADM algorithm from \cite{Yu2017A} and \cite{Wu2025A}.

\subsection{Parallel ADMM}
When designing algorithms for distributed parallel processing, the setup generally involves a central machine and several local machines. 
Assume the data matrix $\tilde{\bm X}$ and the response vector $\tilde{\bm y}$ are distributed across $M$ local machines as follows,
\begin{align}\label{mdata}
\tilde{\bm X} =(\tilde{\bm X}_1^\top, \tilde{\bm X}_2^\top, \dots, \tilde{\bm X}_M^\top)^\top  \ \text{and} \  \ \tilde{\bm y} = (\tilde{\bm y}_1^\top, \tilde{\bm y}_2^\top, \dots, \tilde{\bm y}_M^\top)^\top,
\end{align}
where $\tilde{\bm X}_m \in \mathbb{R}^{n_m \times p}$, $\tilde{\bm y}_m \in \mathbb{R}^{n_m}$ \ and $\sum_{m=1}^M n_m = n$.
To accommodate the structure of distributed storage data, \cite{Boyd2010Distributed} introduced \(\bm \beta_m = \bm \beta, m =1,2,\dots,M \), to enable parallel processing of the data. Then, the constrained optimization problem can be formulated as 
\begin{align}\label{intr2}
\min_{\bm \beta, \bm r_m, \bm \beta_m} & \quad  \sum_{m=1}^{M}\rho_\tau (\bm r_m)  + {P}_\lambda(|\bm \beta|), \notag \\
\text{s.t.} \ \bm{\beta}_m = \bm{\beta},  \   \tilde{\bm y}_m - &   \tilde{\bm X}_m \bm \beta_m    =  \bm r_m, \  m =1,2,\dots,M,
\end{align} 
where $\rho (\bm r_m) = \sum_{i=1}^{n_m} \rho (r_i)$. The augmented Lagrangian form of (\ref{intr2}) is
\begin{align}\small
L_\mu(\bm \beta, \bm r_m, \bm \beta_m, \bm d_m, \bm e_m) & = \
 \sum_{m=1}^{M} \left[ \rho(\bm r_m)  +  \bm d_m^\top ( \tilde{\bm y}_m -  \tilde{\bm X}_m \bm \beta_m  -  \bm r_m) +  \frac{\mu}{2} \|   \tilde{\bm y}_m -  \tilde{\bm X}_m \bm \beta_m  - \bm r_m \|_2^2 \right.  \notag \\
& \left.  + \  \bm e_m^\top (\bm \beta_m- \bm \beta) +  \frac{\mu}{2}  \|\bm \beta_m - \bm \beta \|_2^2 \right]  + {P}_\lambda(|\bm \beta|),
\end{align}
where $\bm d_m$  and $\bm e_m$ are dual variables corresponding to the linear constraints. Given \((\bm{r}_m^0, \bm{\beta}_m^0, \bm{d}_m^0, \bm{e}_m^0)\), the iterative scheme of parallel ADMM for problem (\ref{intr2}) is as follows,
\begin{equation}\label{twoupadmm}
\left\{ \begin{array}{l}
\bm \beta^{k+1} \ \leftarrow  \mathop {\arg \min }\limits_{\bm \beta} \left\{ L_\mu(\bm \beta, \bm r_m^k, \bm \beta_m^k, \bm d_m^k, \bm e_m^k)  \right \};\\
\bm r_m^{k+1} \ \leftarrow  \mathop {\arg \min }\limits_{\bm r_m} \left\{L_\mu(\bm \beta^{k+1}, \bm r_m,  \bm \beta_m^k, \bm d_m^k, \bm e_m^k)  \right \}; \\
\bm \beta_m^{k+1} \ \leftarrow  \mathop {\arg \min }\limits_{\bm \beta_m} \left\{L_\mu(\bm \beta^{k+1}, \bm r^{k+1}, \bm \beta_m,  \bm d_m^k, \bm e_m^k)  \right \}; \\
\bm d_m^{k+1} \ \leftarrow  \bm{d}_m^{k} - \mu( -\tilde{\bm y}_m + \tilde{\bm X}_m \bm \beta_m^{k+1}  + \bm r_m^{k+1}); \\
\bm e_m^{k+1} \ \leftarrow  \bm{e}_m^{k} - \mu(-\bm \beta_m^{k+1} + \bm \beta^{k+1}).
\end{array} \right.
\end{equation}

This distribution enables efficient parallelization of the algorithm. Each local machine independently solves its own subproblem (e.g., $\bm r_m, \bm \beta_m, \bm u_m, \bm v_m$), while the central machine consolidates the results by updating $\bm \beta$ and coordinates updates among the local machines. This framework facilitates parallel processing and effective management of large datasets.
In subsection 3.2 of \cite{Yu2017A}, it is indicated that for convex penalties \( P_\lambda \), the QPADM converges to the solution of \eqref{intr2}. Here, we will briefly explain the proof approach. For ease of description, we will only discuss the case where \( M = 2 \); the case where \( M > 2 \) is similar.

When $M=2$, the constraint form of \eqref{intr2} is  written in matrix form as
$$\bm A_1 \bm \beta + \bm A_2 \bm r_1 + \bm A_3 \bm r_2 + \bm B_1 \bm \beta_1 +  \bm B_2 \bm \beta_2 = \bm e, $$
 where
 \[
\bm A_1 = [-\bm I_p, -\bm I_p, \bm 0^\top, \bm 0^\top]^\top, \quad
\bm A_2 = [\bm 0^\top, \bm 0^\top, \bm I_{n_1}, \bm 0^\top]^\top, \quad
\bm A_3 = [\bm 0^\top, \bm 0^\top, \bm 0^\top, \bm I_{n_2}]^\top,
\]
\[
\bm B_1 = [\bm I_p, \bm 0^\top, \bm X_1^\top, \bm 0^\top]^\top, \quad
\bm B_2 = [\bm 0^\top, \bm I_p, \bm 0^\top, \bm X_2^\top]^\top, \quad
\bm e = [\bm 0^\top, \bm 0^\top, \bm y_1^\top, \bm y_2^\top]^\top.
\] Note that \(\bm A_1 \), \(\bm A_2 \), and \(\bm A_3 \) are mutually orthogonal, as are \(\bm B_1 \) and \(\bm B_2 \). Therefore,  \( \bm \beta \), \( \bm r_1 \), and \( \bm r_2 \) can be considered as a single variable, and \(  \bm \beta_1  \) and \( \bm \beta_2 \) can also be considered as a single variable. Consequently, the QPADM  degrades to a traditional two-block ADMM algorithm in the parallel case. A similar conclusion applies to non-parallel QPADM, provided that  $\bm \beta$ and $\bm r$ are treated as a single block, and $\bm z$ is treated as a separate block. The convergence of the two-block traditional ADMM algorithm has been well-studied, thus QPADM is convergent as well. According to \cite{He2015On}, it exhibits linear convergence rate.

However, this does not apply to QPADM-slack in \cite{Guan2020An} and \cite{Fan2021Penalized} because the introduction of slack variables prevents it from being formulated as a two-block ADMM. We will provide a detailed discussion on this in subsequent sections, which also serves as the motivation for this paper.

\section{QPADM-slack and Gaussian Back Substitution}\label{sec3}
%\vspace{-0.5em}
\subsection{QPADM-slack}\label{sec31}
%\vspace{-0.5em}
Simulation results from  \cite{Yu2017A} indicated that QPADM may require a large number of iterations to converge for penalized quantile regressions. This poses challenges for their practical application to big data, particularly in distributed environments where communication costs are high. To address this issue, \cite{Fan2021Penalized} proposed using two sets of  nonnegative slack variables $\bm{\xi} = (\xi_1, \xi_2, \ldots, \xi_n)^\top \in \mathbb{R}^n$ and $\bm{\eta} = (\eta_1, \eta_2, \ldots, \eta_n)^\top \in \mathbb{R}^n$ to represent the quantile loss. Thus, (\ref{intr2}) can be written as
\begin{align}\label{intr3}
& \arg\min_{\bm \beta, \bm \xi_m, \bm \eta_m,\bm \beta_m}  \quad   \sum_{m=1}^{M} \left[  \tau \bm 1_{n_m}^\top \bm \xi_m + (1-\tau)  1_{n_m}^\top \bm \eta_m \right]  + {P}_\lambda(|\bm \beta |), \notag \\
& \text{s.t.} \  \bm{\beta}_m = \bm{\beta}, \ \tilde{\bm y}_m -  \tilde{\bm X}_m \bm \beta_m     =  \bm \xi_m  - \bm \eta_m,  \bm \xi_m \ge 0, \bm \eta_m \ge 0,
\end{align}
where $\bm{\xi}  = (\bm \xi_1^\top,\bm \xi_2^\top.\dots,\bm \xi_M^\top)^\top$ and $\bm{\eta}  = (\bm \eta_1^\top,\bm \eta_2^\top.\dots,\bm \eta_M^\top)^\top$. When $\tau=1$, the above equation is used by \cite{Guan2020An} to complete the classification task. 

The augmented Lagrangian form of (\ref{intr3}) is
\begin{align}\small \label{Lu}
L_\mu(\bm \beta,& \bm \xi_m, \bm \eta_m, \bm \beta_m, \bm d_m, \bm e_m) = 
 \sum_{m=1}^{M} \left[  \tau \bm 1_{n_m}^\top \bm \xi_m + (1-\tau)  \bm 1_{n_m}^\top \bm \eta_m   +  \bm d_m^\top (\bm \beta_m- \bm \beta) +  \frac{\mu}{2}  \|\bm \beta_m - \bm \beta \|_2^2  \right.  \notag \\
& \left.  +  \bm e_m^\top (  \tilde{\bm y}_m -  \tilde{\bm X}_m \bm \beta_m     - \bm \xi_m  + \bm \eta_m) +  \frac{\mu}{2} \|   \tilde{\bm y}_m -  \tilde{\bm X}_m \bm \beta_m  -   \bm \xi_m  + \bm \eta_m \|_2^2  \right]  + {P}_\lambda(|\bm \beta|). 
\end{align}

Given \(( \bm \xi_m^0, \bm \eta_m^0, \bm{\beta}_m^0, \bm{d}_m^0, \bm{e}_m^0)\), the iterative scheme of parallel ADMM for problem (\ref{intr3}) is as follows,
\begin{equation}\label{threeupadmm}
\left\{ \begin{array}{l}
\bm \beta^{k+1} \ \leftarrow  \mathop {\arg \min }\limits_{\bm \beta} \left\{ L_\mu( \bm \beta, \bm \xi_m^k, \bm \eta_m^k, \bm \beta_m^k, \bm d_m^k, \bm e_m^k)  \right \};\\
\bm \xi_m^{k+1} \ \leftarrow  \mathop {\arg \min }\limits_{\bm \xi_m \ge \bm 0} \left\{L_\mu(\bm \beta^{k+1}, \bm \xi_m,  \bm \eta_m^k, \bm \beta_m^k, \bm d_m^k, \bm e_m^k)  \right \}; \\
\bm \eta_m^{k+1} \ \leftarrow  \mathop {\arg \min }\limits_{\bm \eta_m \ge \bm 0} \left\{L_\mu(\bm \beta^{k+1}, \bm \xi_m^{k+1},  \bm \eta_m, \bm \beta_m^k, \bm d_m^k, \bm e_m^k)  \right \}; \\
\bm \beta_m^{k+1} \ \leftarrow  \mathop {\arg \min }\limits_{\bm \beta_m} \left\{L_\mu(\bm \beta^{k+1}, \bm \xi_m^{k+1},  \bm \eta_m^{k+1},  \bm \beta_m,  \bm d_m^k, \bm e_m^k)  \right \}; \\
\bm d_m^{k+1} \ \leftarrow  \bm{d}_m^{k} - \mu( - \bm \beta_m^{k+1} + \bm \beta^{k+1} );\\
\bm e_m^{k+1} \ \leftarrow  \bm{e}_m^{k} - \mu( - \tilde{\bm y}_m + \tilde{\bm X}_m \bm \beta_m + \bm \xi_m^{k+1} - \bm \eta_m^{k+1} ).
\end{array} \right.
\end{equation} 
Clearly, $\bm \beta^{k+1}$  will be updated on the central machine, while $\bm \xi_m^{k+1}, \bm \eta_m^{k+1}, \bm \beta_m^{k+1}, \bm d_m^{k+1}$ and $\bm e_m^{k+1}$ will be updated in parallel on the sub machines.
In section 3 of \cite{Fan2021Penalized}, they provided closed-form solutions for the ($\bm \beta, \bm \xi_m, \bm \eta_m, \bm \beta_m$)-subproblems, which significantly facilitated the implementation of the parallel algorithm.

The iteration process described above cannot be reduced to a two-block ADMM algorithm; it can only be reduced to a three-block ADMM algorithm. Next, we will briefly explain this point using mathematical expressions.   When $M\ge 2$, the constraint form of \eqref{intr3} is  written in matrix form as
\begin{align}\label{constr}
\bm A_1 \bm \beta + \sum_{m=2}^{M+1} \bm A_m \bm \xi_{m-1}  + \sum_{m=1}^{M}  \bm B_m \bm \eta_m  +  \sum_{m=1}^{M} \bm C_m \bm \beta_m  = \bm e, 
\end{align}
where \(\bm{A}_m\), \(\bm{B}_m\), and \(\bm{C}_m\) are block matrices partitioned into \(2M\) blocks by rows, and \(\bm{e}\) is the corresponding column vector partitioned into \(2M\) blocks.  For example,  for  \( M = 2 \),  $\bm A_1 = [- \bm I_p, - \bm I_p, \bm 0^\top,   \bm 0^\top]^\top$, $\bm A_2 = [\bm 0^\top, \bm 0^\top, \bm I_{n_1}, \bm 0^\top ]^\top$,  $\bm A_3 = [\bm 0^\top, \bm 0^\top, \bm 0^\top, \bm I_{n_2} ]^\top$,  $\bm B_1 = [\bm 0^\top, \bm 0^\top, -\bm I_{n_1}, \bm 0^\top ]^\top$, $\bm B_2 = [\bm 0^\top, \bm 0^\top, \bm 0^\top, -\bm I_{n_2} ]^\top$,  $\bm C_1 = [\bm I_p,  \bm 0^\top, \bm X_1^\top,\bm 0^\top]^\top$,  $\bm C_2= [ \bm 0^\top, \bm I_p, \bm 0^\top,   \bm X_2^\top]^\top$, and $\bm e = [\bm 0^\top, \bm 0^\top, \bm y_1^\top,  \bm y_2^\top]^\top$.  
The above six matrices cannot be divided into two sets of mutually orthogonal matrices like QPADM.
 Indeed, these six matrices can be partitioned into three mutually orthogonal groups in the following order: $\bm A = [\bm A_1, \bm A_2, \bm A_3],  \bm B = [\bm B_1, \bm B_2]$, and $\bm C = [\bm C_1, \bm C_2]$.   The same operation can also be implemented when $M>2$, that is, 
\begin{align}\label{conm} 
\bm A = [\bm A_1, \bm A_2,  \dots, \bm A_{M+1}], \  \bm B = [\bm B_1, \bm B_2,\dots,\bm B_M], \ \text{and} \  \bm C = [\bm C_1, \bm C_2,\dots,\bm C_M]. 
\end{align}
Hence, \(  \bm \beta \) and \( \bm \xi_m \)  can be considered as the first variable, \( \bm \eta_m \)  as the second variable, and \( \bm \beta_m \)  as the third variable. Thus, QPADM-slack is actually a three-block ADMM algorithm in terms of optimization form.

 \cite{Chen2016The} demonstrated that directly extending the ADMM algorithm for convex optimization with three or more separable blocks may not guarantee convergence, and they provided an example of divergence.  As a result, QPADM-slack  does not have a theoretical guarantee of convergence even in the convex case.

 \subsection{Gaussian Back Substitution}\label{sec32}
\cite{He2012Alternating}  pointed out that although the ADMM algorithm with three blocks cannot be theoretically proven to converge, the algorithm can be corrected for some iterative solutions through simple Gaussian back substitution, thereby making the algorithm convergent.  Because the correction matrix is an upper triangular block matrix, it is called \textit{Gaussian back substitution}. The technique of Gaussian back substitution was extensively applied to various statistical and machine learning domains, including but not limited to the works of \cite{Ng2013Coupled}, \cite{He2013Linearized}, \cite{He2017Convergence}, and \cite{He2018A}.

Let 
\begin{align}\label{abcd}
 \bm a = (\bm \beta, \bm \xi_1,  \dots, \bm \xi_M ),  \ \bm b =( \bm \eta_1,  \dots, \bm \eta_M) \ \text{and} \ \bm c = ( \bm \beta_1,  \dots, \bm \beta_M),
\end{align}
and $\bm d = (\bm d_1,\dots,\bm d_M,\bm e_1,\dots,\bm e_M)$.
To facilitate the description of the correction process (Gaussian back substitution), we define the sequence of \( k+1 \) iterations \(({\bm b}^{k+1}, {\bm c}^{k+1})\), generated by (\ref{threeupadmm}), as \((\tilde{\bm b}^k, \tilde{\bm c}^k)\).  Then, the  Gaussian back substitution of QPADM-slack is defined as 
\begin{align}\label{gbs}
 \begin{bmatrix}
\bm b^{k+1}  \\
\bm c^{k+1} 
\end{bmatrix} =  \begin{bmatrix}
 {\bm b}^k  \\
{\bm c}^k 
\end{bmatrix} -    \begin{bmatrix}
v  \bm I & -v (\bm B^\top \bm B)^{-1} \bm B^\top \bm C \\
\bm 0 & v \bm I
\end{bmatrix}  \begin{bmatrix}
{\bm b}^k  -   \tilde{\bm b}^k  \\
{\bm c}^k  - \tilde{\bm c}^k
\end{bmatrix},
\end{align}
where $\nu \in (0,1)$.
Note that $\bm B^\top \bm B$  is an identity matrix, and thus $$(\bm B^\top \bm B)^{-1} \bm B^\top \bm C = \bm B^\top \bm C  = \begin{bmatrix}
  -  \bm X_1 & 0  & \cdots & 0 \\
    0 & -\bm X_2  & \cdots & 0 \\
    0 & 0  & \cdots & 0 \\
    \vdots & \vdots  & \ddots & \vdots \\
    0 & 0 & \cdots &- \bm X_M
\end{bmatrix}, $$
it follows that
\begin{equation}\label{cr}
\begin{bmatrix}
\bm{b}^{k+1} \\
\bm{c}^{k+1}
\end{bmatrix}
=
\begin{bmatrix}
\bm \eta_m^{k+1} \\
\bm \beta_m^{k+1}
\end{bmatrix}
=
\begin{bmatrix}
(1-\nu){\bm \eta}_m^k + \nu  \tilde{\bm \eta}_m^k - \nu \bm X_m (\bm \beta_m^k - \tilde{\bm \beta}_m^k) \\
(1-\nu){\bm \beta}_m^k + \nu  \tilde{\bm \beta}_m^k.
\end{bmatrix}.
\end{equation}

We summarize the process of QPADM-slack with Gaussian back substitution (QPADM-slack(GB)) in Algorithm \ref{alg1}. Its parallel algorithm diagram is shown in Figure \ref{fig41}. Algorithm  \ref{alg1} differs from QPADM-slack only in the correction steps performed in each local machine, specifically steps 3 and 4 in \textbf{Local machines}. This correction process involves only linear operations, with the heaviest computational load being the matrix-vector multiplication in the \(\bm \eta_m^{k+1}\) correction step.  Next, we will establish the global convergence of QPADM-slack(GB) and its linear convergence rate. The proof of the following theorem is provided in Appendix \ref{4B}.%%%%

\begin{figure}
    \centering
    \includegraphics[width=1\linewidth]{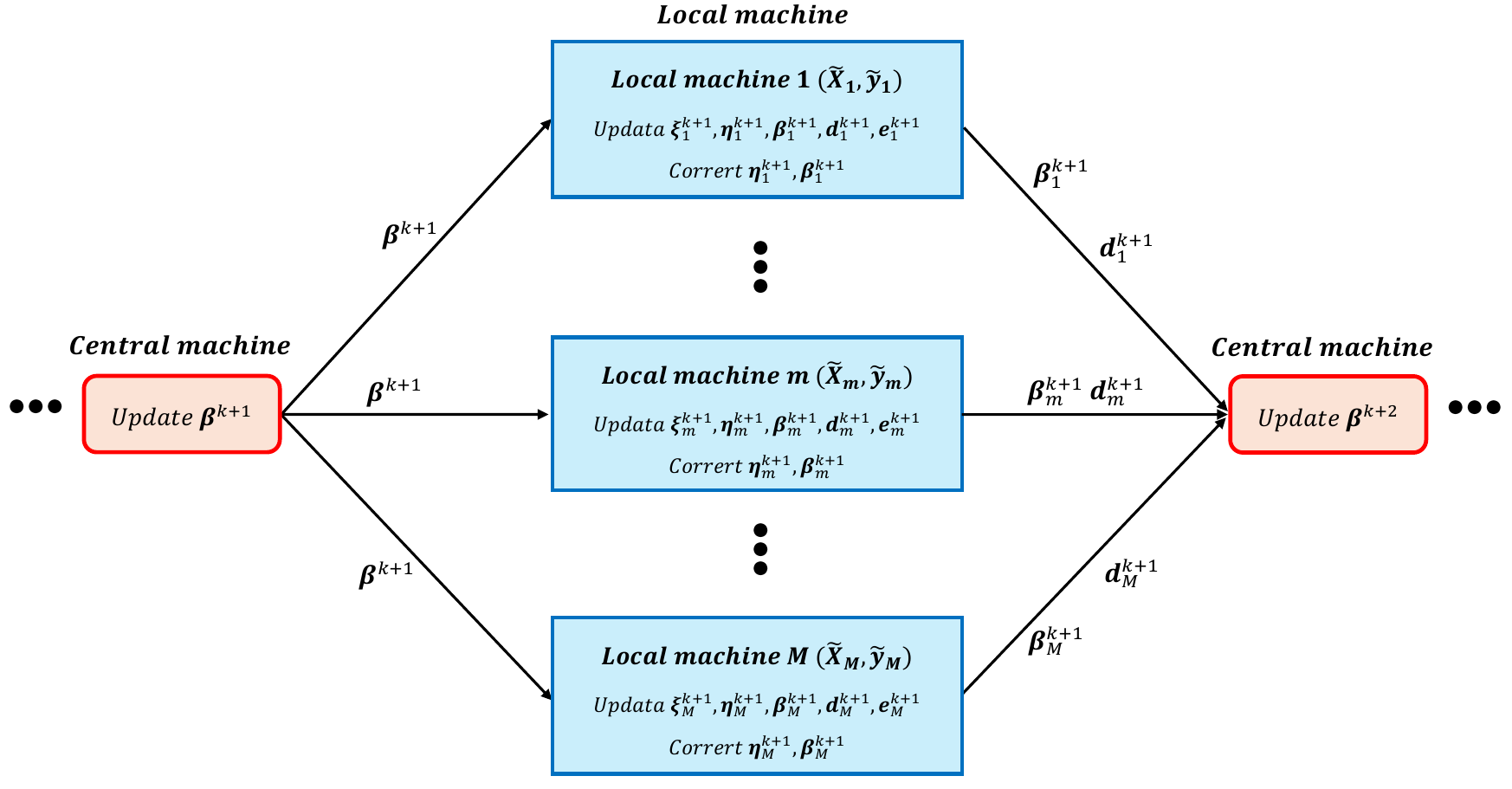}
    \caption{Schematic diagram of QPADM-slack(GB) algorithm.}
    \label{fig41}
\end{figure}

\begin{thm}\label{TH1}
Let the sequence \( \left\{ \bm g^k= ( \bm{b}^k, \bm c^k, \bm d^k) \right\} \) be generated by QPADM-slack(GB).
\vspace{-1.5em}
\begin{enumerate}
\item (Algorithm global convergence). It converges to some  \( \bm g^\infty =  ( \bm{b}^\infty, \bm c^\infty, \bm d^\infty ) \)  that belongs to \( \mathcal{V}^* \), where \( \mathcal{V}^* = \{ (\bm{b}^*, \bm{c}^*, \bm{d}^*) \mid (\bm{a}^*, \bm{b}^*, \bm{c}^*, \bm{d}^*) \in \Omega^* \} \),  and $ \Omega^*$ is the set of optimal solutions for (\ref{intr3}).
\item (Linear convergence rate).  For any integer \(K > 0\), we have
\begin{align}\label{p25}
\|\bm g^K - \bm g^{K+1} \|_{\bm{H}}^2 \leq \frac{1}{ c_0\left(K+1\right)}  \|\bm g^0 - \bm g^*\|_{\bm{H}}^2,
\end{align}
where $c_0$ is a positive constant and $\bm H$ is a  positive definite matrix.
\end{enumerate}
\end{thm}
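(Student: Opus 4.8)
The proof rests on the standard variational-inequality framework for Gaussian back substitution ADMM developed by He and collaborators, so the plan is to cast QPADM-slack(GB) into that machinery. First I would write the optimality conditions for the four subproblems in \eqref{threeupadmm} as a mixed variational inequality in the stacked variable $\bm w = (\bm a, \bm b, \bm c, \bm d)$, isolating the ``prediction'' step (the outputs $\tilde{\bm b}^k, \tilde{\bm c}^k$ of one sweep of \eqref{threeupadmm}) from the ``correction'' step \eqref{gbs}. The key structural facts to record are: (i) the constraint matrices split into three mutually orthogonal groups $\bm A, \bm B, \bm C$ as in \eqref{conm}, so the $\bm a$-block and the multiplier block need no correction; (ii) $\bm B^\top \bm B = \bm I$, which makes the correction matrix $\bm M = \begin{bmatrix} v\bm I & -v\bm B^\top\bm C \\ \bm 0 & v\bm I\end{bmatrix}$ explicit and, crucially, makes $\bm M$ invertible for $v \in (0,1)$.

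Next I would exhibit the positive definite matrix $\bm H$. Following the He--Tao--Yuan construction, one takes $\bm H = \bm Q \bm M^{-1}$ (up to the appropriate $\mu$-scaling on the multiplier block), where $\bm Q$ collects the quadratic coefficients arising from the augmented Lagrangian cross terms; because $\bm B, \bm C$ have full column rank and $\bm B^\top\bm B = \bm I$, one verifies $\bm H \succ 0$ and, more importantly, the fundamental descent inequality
\[
\|\bm g^{k+1} - \bm g^*\|_{\bm H}^2 \le \|\bm g^k - \bm g^*\|_{\bm H}^2 - c_0 \|\bm g^k - \bm g^{k+1}\|_{\bm H}^2
\]
for every $\bm g^* \in \mathcal V^*$ and some $c_0 > 0$ depending on $v$ and $\mu$. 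This contraction-type inequality is the heart of the argument; it is obtained by combining the prediction variational inequality with the monotonicity of the subdifferential of the convex objective and then substituting the correction identity \eqref{gbs}. Part (1) then follows by the usual Fej\'er-monotonicity argument: $\{\bm g^k\}$ is bounded, $\|\bm g^k - \bm g^{k+1}\|_{\bm H} \to 0$, any cluster point solves the variational inequality hence lies in $\mathcal V^*$, and the Fej\'er inequality upgrades subsequential convergence to full convergence.

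For part (2), the linear-rate display \eqref{p25}, I would sum the descent inequality from $k=0$ to $K$ to get $c_0 \sum_{k=0}^K \|\bm g^k - \bm g^{k+1}\|_{\bm H}^2 \le \|\bm g^0 - \bm g^*\|_{\bm H}^2$, then argue that the sequence $\|\bm g^k - \bm g^{k+1}\|_{\bm H}^2$ is monotonically nonincreasing in $k$ (this uses the variational inequality once more, applied to two consecutive iterates rather than to an iterate and the optimum), so that $(K+1)\|\bm g^K - \bm g^{K+1}\|_{\bm H}^2 \le \sum_{k=0}^K \|\bm g^k - \bm g^{k+1}\|_{\bm H}^2$, which yields \eqref{p25}. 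I note that despite the section title, \eqref{p25} is really an $O(1/K)$ rate on the successive-iterate gap, so the monotonicity lemma is exactly what makes it go through; I would be careful to state it as such.

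The main obstacle I anticipate is verifying the descent inequality with an explicit positive $c_0$ — that is, checking that the matrix arising from $\bm H$, $\bm M$, and $\bm Q$ after the cross-term manipulations is positive definite with a quantifiable gap, rather than merely positive semidefinite. This is where the orthogonal splitting \eqref{conm} and the identity $\bm B^\top\bm B = \bm I$ must be used in full force, and where the restriction $v \in (0,1)$ (as opposed to $v \le 1$) enters; the block structure of $\bm C$ (block-diagonal with blocks $-\bm X_m$) also matters for controlling $\bm C^\top\bm C$. A secondary technical point is that the $\bm a$-subproblem bundles $\bm\beta$ with all the $\bm\xi_m$ and the objective in $\bm\beta$ is only convex (possibly the nonconvex penalty case is excluded here, as the theorem implicitly assumes convex $P_\lambda$), so I would make sure the variational-inequality reformulation only uses convexity of $P_\lambda$ and the separability across blocks, and that the Hadamard/entrywise nonnegativity constraints on $\bm\xi_m, \bm\eta_m$ are folded in as indicator functions without disturbing the monotonicity.
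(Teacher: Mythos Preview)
Your plan is essentially the paper's own approach: cast the algorithm in the He--Tao--Yuan prediction--correction variational-inequality framework, exhibit matrices $\bm Q, \bm M, \bm H, \bm G$ with $\bm H\bm M = \bm Q$ and $\bm G = (\bm Q^\top + \bm Q) - \bm M^\top\bm H\bm M \succ 0$, derive the contraction $\|\bm g^{k+1} - \bm g^*\|_{\bm H}^2 \le \|\bm g^k - \bm g^*\|_{\bm H}^2 - \|\bm g^k - \tilde{\bm g}^k\|_{\bm G}^2$, and then run Fej\'er monotonicity plus the monotonicity of successive gaps for the $O(1/K)$ bound.

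The one technical point you are glossing over is the handling of the dual block. You say the multiplier block ``needs no correction'' and display $\bm M$ as a $2\times 2$ block matrix on $(\bm b,\bm c)$, but the theorem's $\bm g^k = (\bm b^k,\bm c^k,\bm d^k)$ is a triple and the $\bm H$-norm lives on all three blocks. The paper closes this by introducing an \emph{auxiliary} predicted multiplier $\tilde{\bm d}^k = \bm d^k - \mu(\bm A\tilde{\bm a}^k + \bm B\bm b^k + \bm C\bm c^k - \bm e)$ (note the \emph{old} $\bm b^k,\bm c^k$, not the freshly computed ones), which differs from the algorithmic $\bm d^{k+1}$; the correction matrix is then $3\times 3$ with third row $[-\mu\bm B,\ -\mu\bm C,\ \bm I]$, chosen so that $\bm g^{k+1} = \bm g^k - \bm M(\bm g^k - \tilde{\bm g}^k)$ exactly reproduces the actual QPADM-slack dual update. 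Without this device you cannot write the full correction as a single linear map on $\bm g$, the identity $\bm H\bm M = \bm Q$ does not hold on the $\bm d$-block, and the descent inequality does not close in the $\bm H$-norm required by the statement. Once you insert this bookkeeping, the rest of your outline goes through verbatim. (Minor aside: the restriction $\nu\in(0,1)$ is what makes $\bm G\succ 0$; invertibility of $\bm M$ only needs $\nu\neq 0$.)
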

\begin{rem}
 It is clear that \(\|\bm{g}^0 - \bm{g}^*\|_{\bm{H}}^2\) is a positive constant. Therefore, \(\|\bm{g}^K - \bm{g}^{K+1}\|_{\bm{H}}^2 \leq \mathcal{O}\left(\frac{1}{K}\right)\), which is known as a linear convergence rate. In addition, during the proof of this theorem, we demonstrated that \( \|\bm{g}^k - \bm{g}^{*}\|_{\bm{H}}^2 \) and \(  \|\bm{g}^k - \bm{g}^{k+1}\|_{\bm{H}}^2  \) are monotonically nonincreasing, that is, \( \|\bm{g}^{k+1} - \bm{g}^{*}\|_{\bm{H}}^2   \leq \|\bm{g}^k - \bm{g}^{*}\|_{\bm{H}}^2 \) and \( \|\bm{g}^{k+1} - \bm{g}^{k+2}\|_{\bm{H}}^2   \leq \|\bm{g}^k - \bm{g}^{k+1}\|_{\bm{H}}^2 \),  see the propositions in the Appendix \ref{4B}.
\end{rem}

\begin{algorithm}\small
\caption{\small{QPADM-slack with Gaussian back substitution}}
\label{alg1}
\begin{algorithmic}
\STATE {\textbf{Input:} $\bullet$ Central machine: $\mu, M, \bm \lambda$.\\
\qquad \  \ \ \ \ $\bullet$ The $m$-th local machine:  $\boldsymbol{X}_m,\boldsymbol{y}_m$; $\mu$, $\nu$, $\tau$; $\bm \beta_m^0$, $\bm \eta_m^0$, $\bm d_m^0$, $\bm e_m^0$.}
\STATE {\textbf{Output:} the total number of iterations $K$,  $\boldsymbol{\beta}^K$. }
\STATE {\textbf{while} not converged \textbf{do}}
\STATE {\ \textbf{Central machine}: 1. Receive ${\bm \beta}_m^{k}$ and $\bm d_m^k$ transmitted by $M$ local machines,\\ \qquad \qquad \qquad  \qquad \quad \ 2. Update $\bm \beta^{k+1}$ as QPADM-slack, \\
\qquad \qquad \qquad  \qquad \quad \ 3. Send $\bm \beta^{k+1}$ to the local machines.
}
\STATE {\ \textbf{Local machines}: \ \  for $m =1 ,2, \dots, M$ (in parallel) \\
\qquad \qquad \qquad  \qquad \quad \ 1. Receive $\bm \beta^{k+1}$  transmitted by the central machine, \\
\qquad \qquad \qquad  \qquad \quad \ 2. Update $\bm \xi_m^{k+1}$, ${\bm \eta}_m^{k+1}$,  $ \bm {\beta}_m^{k+1}$,  $\bm d_m^{k+1}$ and $\bm e_m^{k+1}$ as QPADM-slack,\\
\qquad \qquad \qquad  \qquad \quad \ 3. Let $\tilde{\bm \eta}_m^{k}  = \bm \eta_m^{k+1}$ and $\tilde{\bm \beta}_m^{k}  = \bm \beta_m^{k+1}$,\\
\qquad \qquad \qquad  \qquad \quad \ 4. Correct $\bm \eta_m^{k+1} $ and $\bm \beta_m^{k+1}$ according to \textbf{Gaussian back substitution} in (\ref{cr}), \\
\qquad \qquad \qquad  \qquad \quad \ 5. Send ${\bm \beta}_m^{k+1}$ and $\bm d^{k+1}_m$  to the central machine.
}
\STATE {\textbf{end while}}
\STATE {\textbf{return} solution}.
\end{algorithmic}
\end{algorithm}

\subsection{Nonconvex Extension and Local Linear Approximation}
In recent years, nonconvex penalized quantile regression has been extensively studied, both theoretically and algorithmically, see \cite{Wang2012Quantile}, \cite{Fan2014Strong}, \cite{Gu2018ADMM}, \cite{Fan2021Penalized}, \cite{Wang2024Analysis}. However, there has been little research on quantile loss SVMs with nonconvex regularization terms. In this subsection, we propose the nonconvex quantile loss SVMs, defined as 
\begin{equation}\label{pinsvm2}
\begin{array}{l}
\mathop {\arg \min }\limits_{\boldsymbol{\beta},{\xi} \ge \boldsymbol 0}  \sum\limits_{i = 1}^n {{{\xi} _i}} +  P_{\bm \lambda}(|\bm \beta|)    \\
\text{s.t.}\;\; \sum\limits_{j = 1}^p y_ix_{ij}\beta_j  \ge 1 - {1 \over \tau}{{\xi} _i},\quad i = 1,2,\cdots n,\\
\ \ \ \ \ \  \sum\limits_{j = 1}^p y_ix_{ij}\beta_j   \le 1 + {1 \over (1-\tau)}{{\xi} _i}, \ i = 1,2,\cdots n.
\end{array}
\end{equation}
When \( \tau = 1 \), the second inequality is always satisfied, and thus the aforementioned SVM simplifies to the nonconvex hinge SVM addressed in \cite{Zhang2016Variable} and \cite{Guan2018An}. It is not difficult to see that \eqref{pinsvm2} can be included in \eqref{pqr}. In this paper, we mainly consider two popular nonconvex regularizations, SCAD penalty and MCP penalty. 

In high-dimensional penalized linear regression and classification models, convex regularization terms like adaptive LASSO and elastic net ensure global optimality and computational efficiency, while non-convex regularization terms may provide better estimation and prediction performance but pose computational challenges due to the lack of global optimality. 
Fortunately, for quantile regression models and SVM models with nonconvex penalties, \cite{Fan2014Strong} and \cite{Zhang2016Variable} respectively pointed out that these problems can be uniformly solved by combining the local linear approximation (LLA, \cite{Zou2008One}) method with an effective solution for the weighted \(\ell_1\) penalized quantile regression and SVM.

LLA involves using the first-order Taylor expansion of the nonconvex regularization term to replace the original nonconvex regularization term, that is
\begin{align}\label{one}
P_{\bm \lambda}(|\bm \beta|) \approx  P_{\bm \lambda}(|\bm \beta^l|) + \nabla P_{\bm \lambda}(|\bm \beta^l|)^\top (|\bm \beta| - |\bm \beta^l|), \ \text{for} \ \bm \beta  \approx \bm \beta^l,
\end{align}
where  $\bm \beta^l$ is the solution from the last iteration, and $\nabla P_{\bm\lambda}(|\bm \beta^l|) = (\nabla P_{\bm \lambda_1}(|\bm \beta_1^l|), \nabla P_{\bm \lambda_2}(|\bm \beta_2^l|), \dots, \nabla P_{\bm \lambda_p}(|\bm \beta_p^l|)) ^\top$.

$\bullet$ For SCAD, we have \begin{align}\label{scad}
\nabla P_{\bm \lambda_j}(|\bm \beta_j|) = \begin{cases}
\bm \lambda_j, & \text{{if }} |\bm \beta_j| \leq  \bm \lambda_j, \\
\frac{a \bm\lambda_j -  |\bm \beta_j|}{a-1}, & \text{{if }} \bm \lambda_j < |\bm \beta_j|  < a  \bm \lambda_j, \\
0 , & \text{{if }} |\bm \beta_j| \ge a  \bm \lambda_j. \\
\end{cases}
\end{align}

$\bullet$ For MCP, we have \begin{align}\label{mcp}
\nabla P_{\bm \lambda_j}(|\bm \beta_j|)  =
\begin{cases}
\bm\lambda_j  - \frac{|\bm \beta_j|}{a}, & \text{{if }}  |\bm \beta_j|  \le a  \bm \lambda_j,\\
0, & \text{{if }}  |\bm \beta_j|  > a \bm \lambda_j.\\
\end{cases}
\end{align}
Then, the nonconvex penalized quantile regressions and SVMs can be written as
\begin{align}\label{41}
 \arg \min_{\bm \beta} \left\{ \sum_{i=1}^{n} \rho_{\tau} \left( \tilde{y_i} - \tilde{\boldsymbol{x}}_i^\top \bm \beta  \right) + P_{\bm\lambda} (|\bm \beta|) \right\}.
\end{align}
 By substituting equation (\ref{one}) into equation (\ref{41}), we can obtain the following optimized form in a weighted manner,
\begin{align}\label{we}
\boldsymbol{\beta}^{l+1} = \mathop {\arg \min }\limits_{\boldsymbol{\beta}} \left\{ \sum_{i=1}^{n} \rho_{\tau} \left( \tilde{y_i} - \tilde{\boldsymbol{x}}_i^\top \bm \beta  \right)  + \sum_{j=1}^{p}\nabla P_{\bm\lambda_j}(|\beta_j^{l}|)|\beta_j|   \right \}.
\end{align}
Note that we only need to make a small change to solve this weighted  optimization form using Algorithm \ref{alg1}. This change only requires replacing $\bm \lambda$ with $\nabla P_{\bm \lambda}(|\bm \beta^{l}|)$. 

To address nonconvex penalized quantile regressions and SVMs through the LLA algorithm, it is crucial to identify a suitable initial value. Following the guidance of \cite{Zhang2016Variable} and \cite{Gu2018ADMM}, we can utilize the solution derived from \( P_{\bm\lambda}(| \bm \beta|) = \| \bm\lambda \odot \bm \beta \|_1 \) in (\ref{41}) as the initial starting point. Subsequently, the solution to (\ref{41}) is obtained by iteratively solving a series of weighted \(\ell_1\) penalized quantile regressions and SVMs. The comprehensive iterative steps of this approach are systematically outlined in Algorithm \ref{alg2}. 
\begin{algorithm}\small
\caption{\small{QPADM-slack(GB) for nonconvex penalized quantile  regressions and SVMs}}
\label{alg2}
\begin{algorithmic}
\STATE {1. Initialize $\bm \beta$ with  $\bm \beta^1$, where $\bm \beta^1$ is obtained by Algorithm \ref{alg1}.}
\STATE {2. For $l=1,2,\dots,L$, continue iterating the LLA iteration until convergence is achieved. }
\STATE {\quad \ 2.1. Compute the weights  $\nabla P_{\bm \lambda}(|\bm \beta^{l}|)=(\nabla P_{\bm \lambda_1}(|\bm \beta_1^{l}|), \nabla P_{\bm \lambda_2}(|\bm \beta_2^{l}|), \dots, \nabla P_{\bm \lambda_p}(|\bm \beta_p^{l}|)) ^\top$ by (\ref{scad}) or  (\ref{mcp}),}
\STATE { \quad  \ 2.2. Solve the weighted problem in (\ref{we}) by modified Algorithm \ref{alg1} with $\bm \lambda$  replacing with $\nabla P_{\bm \lambda}(|\bm \beta^{l}|)$. Let this solution be denoted as $\bm \beta^{l+1}$.}
\end{algorithmic}
\end{algorithm}

The preceding discussion reveals that resolving nonconvex penalized quantile regressions and SVMs requires multiple iterations of weighted \(\ell_1\) penalized models. Theoretically, as shown by \cite{Fan2014Strong} and \cite{Zhang2016Variable}, only two to three iterations suffice to find the Oracle solution for (\ref{41}). In implementing Algorithm \ref{alg2}, we utilize the warm-start technique (\cite{Friedman2010Regularization}), initializing \(\bm \beta^{l+1}\) with \(\bm \beta^{l}\), significantly reducing step 2.2's iteration count, often leading to convergence within just a few to a dozen steps.

\section{Modified QPADM-slack and Gaussian Back Substitution}\label{sec4}
The iteration order of variables in QPADM-slack   is 
\begin{align}\label{order}
\cdots \to {\bm \beta \to  \bm \xi_m \to \bm \eta_m \to \bm \beta_m} \to \cdots,
\end{align}
 and  its Gaussian back substitution is  $\bm \eta_m^{k+1} =  (1-v){\bm \eta}_m^k + v  \tilde{\bm \eta}_m^k + \bm X_m (\bm \beta_m^k - \tilde{\bm \beta}_m^k)$  \text{and}  $\bm \beta_m^{k+1} =  (1-v){\bm \beta}_m^k + v  \tilde{\bm \beta}_m^k$. We should note that the correction step for  $\bm \eta_m$ involves matrix-vector multiplication, which can impose an additional computational burden when both \( n_m \) and \( p \) are large.
\subsection{Modified QPADM-slack}\label{sec41}
To address this issue, we suggest changing the variable update order of  QPADM-slack in (\ref{order}) to
\begin{align} 
\cdots \to \bm \beta_m \to  \bm \xi_m \to \bm \eta_m \to \bm \beta \to \cdots. 
\end{align}
We will explain the reason for adopting this order below.
From the discussion in Section \ref{sec32}, it is evident that the first block among the three blocks of primal variables to be updated does not require correction. To avoid matrix operations in the correction step,  \(\bm{\beta_m}\) should be placed in the first block because only the constraint matrix corresponding to \(\bm{\beta_m}\) is not composed of identity and zero matrices. During the Gaussian back substitution process, we need to compute the inverse of \(\bm{B}^\top \bm{B}\). To keep this inverse as simple as possible, we must place either \(\xi_m\) or \(\eta_m\) in the second block. Here, we choose to place \(\xi_m\) in the second block in sequence. The remaining \(\eta_m\), \( \bm{\beta} \) form the third block.

Thus, with \((\bm{\xi}_m^0, \bm \eta_m^0, \bm{\beta}^0, \bm{d}_m^0, \bm{e}_m^0)\), the iterative scheme of modified QPADM-slack (M-QPADM-slack) for problem (\ref{intr3}) is as follows,
\begin{equation}\label{mqpadm}
\left\{
\begin{aligned}
\bm \beta_m^{k+1} &\leftarrow \mathop {\arg \min }\limits_{\bm \beta_m} \left\{L_\mu(\bm \beta_m, \bm \xi_m^{k},  \bm \eta_m^{k},   \bm \beta^k,  \bm d_m^k, \bm e_m^k)  \right\}; \\
\tilde{\bm \xi}_m^{k} &\leftarrow \mathop {\arg \min }\limits_{\bm \xi_m \ge \bm 0} \left\{L_\mu(\bm \beta_m^{k+1}, \bm \xi_m,  \bm \eta_m^k, \bm \beta^k, \bm d_m^k, \bm e_m^k)  \right\} ;\\
\tilde{\bm \eta}_m^{k} &\leftarrow \mathop {\arg \min }\limits_{\bm \eta_m \ge \bm 0} \left\{L_\mu(\bm \beta_m^{k+1}, \tilde{\bm \xi}_m^{k},  \bm \eta_m,  \bm \beta^k, \bm d_m^k, \bm e_m^k)  \right\};  \\
\tilde{\bm \beta}^{k} &\leftarrow \mathop {\arg \min }\limits_{\bm \beta} \left\{ L_\mu(\bm \beta_m^{k+1}, \tilde{\bm \xi}_m^{k}, \tilde{\bm \eta}_m^{k},  \bm \beta, \bm d_m^k, \bm e_m^k)  \right\}; \\
\bm d_m^{k+1} &\leftarrow \bm{d}_m^{k} - \mu \left( - \bm \beta_m^{k+1} + \tilde{\bm \beta}^{k} \right);  \\
\bm e_m^{k+1} &\leftarrow \bm{e}_m^{k} - \mu \left( - \tilde{\bm y}_m + \tilde{\bm X}_m \bm \beta_m^{k+1}  + \tilde{\bm \xi}_m^{k} - \tilde{\bm \eta}_m^{k} \right);
\end{aligned}
\right.
\end{equation}
where $ L_\mu$ is  defined as in  (\ref{Lu}).  $\bm \beta_m^{k+1}$, $\tilde{\bm \xi}_m^{k}$, $\tilde{\bm \eta}_m^{k}$, $\bm d_m^{k+1}$ and $\bm e_m^{k+1}$ will be updated in parallel by each sub machine, while $\tilde{\bm \beta}^{k}$ will be updated on  the central machine.  However, this update sequence will cause a new problem, that is, the updates of local sub machines become incoherent, necessitating that the central machine first completes the update of $\tilde{\bm \beta}^{k}$ before proceeding with the update of $\bm d_m^{k+1}$. This will result in an additional round of communication between the central machine and the local machines, thereby reducing the efficiency of the algorithm. This issue will not occur in \eqref{threeupadmm} because the updates of variables on its local sub machines are coherent. Fortunately, since the updates of each sub problem in (\ref{mqpadm}) do not strictly depend on the previously updated variables, this issue can be resolved by adjusting the update positions of the dual variables $\bm d_m$ and $\bm e_m$.

For the convenience of describing the correction steps,  we  need to define 
\begin{align}\label{mabcd}
 \bm a =  ( \bm \beta_1,  \dots, \bm \beta_M), \   \bm b = (\bm \xi_1,  \dots, \bm \xi_M ) \ \text{and} \ \bm c =  ( \bm \eta_1,  \dots, \bm \eta_M, \bm \beta).
\end{align}
Correspondingly,  \[
\bm A = \begin{bmatrix}
\bm I_{p} &   & \\
 &  \ddots & \\
 & & \bm I_p\\
\bm X_{1} &   & \\
 &  \ddots & \\
 & & \bm X_M 
\end{bmatrix}, \bm B = \begin{bmatrix}
\bm 0 &   & \\
 &  \ddots & \\
 & & \bm 0\\
- \bm I_{n_1} &   & \\
 &  \ddots & \\
 & & -\bm I_{n_M} 
\end{bmatrix}  \text{and}  \ \bm C = \begin{bmatrix}
&   & & -\bm I_p \\
 &   & & \vdots\\
 & &  &-\bm I_p \\
\bm I_{n_1} &   & & \\
 &  \ddots & & \\
 & & \bm I_{n_M} & 
 \end{bmatrix}.
\]
 It is obvious that the matrices $\bm A$ and $\bm B$ are $2M \times M$ partitioned, while  $\bm C$ is    $2M \times (M+1)$ partitioned.  \( \bm A_m \),  \( \bm B_m \) and \( \bm C_m \) correspond to the block matrices of the \( m \)-th columns of \( \bm A \),  \( \bm B \) and \( \bm C \), respectively. 
We also define the sequence of \( k+1 \) iterations \(({\bm b}^{k+1}, {\bm c}^{k+1})\), generated by (\ref{mqpadm}), as \((\tilde{\bm b}^k, \tilde{\bm c}^k)\).  According to  (\ref{gbs}), it follows from   \[
(\bm B^\top \bm B)^{-1} \bm B^\top \bm C =   \begin{bmatrix}
-\bm I_{n_1} &   & & & \bm 0 \\
 &- \bm I_{n_2} &  &  & \bm 0\\
 &  & \ddots &  &  \vdots \\
 & &  &- \bm I_{n_M}  & \bm 0
 \end{bmatrix}
\] that

\begin{equation}\label{cr22}
\begin{bmatrix}
\bm{b}^{k+1} \\
\bm{c}^{k+1}
\end{bmatrix}
=
\begin{bmatrix}
\bm{\xi}_m^{k+1} \\
\bm{\eta}_m^{k+1} \\
\bm{\beta}^{k+1}
\end{bmatrix}
=
\begin{bmatrix}
(1-\nu) \bm{\xi}_m^k + \nu \tilde{\bm{\xi}}_m^k - \nu (\bm{\eta}_m^k - \tilde{\bm{\eta}}_m^k) \\
(1-\nu) \bm{\eta}_m^k + \nu \tilde{\bm{\eta}}_m^k \\
(1-\nu) \bm{\beta}^k + \nu \tilde{\bm{\beta}}^k
\end{bmatrix},
\end{equation}
where $\nu \in (0,1)$. Unlike the correction step in (\ref{cr}), the correction step in (\ref{cr22}) involves only simple addition and subtraction operations, without matrix-vector multiplications. As a result, the computational burden of this correction is less affected by large values of \(n_m\) and \(p\).

\begin{figure}[H]
    \centering
    \includegraphics[width=1\linewidth]{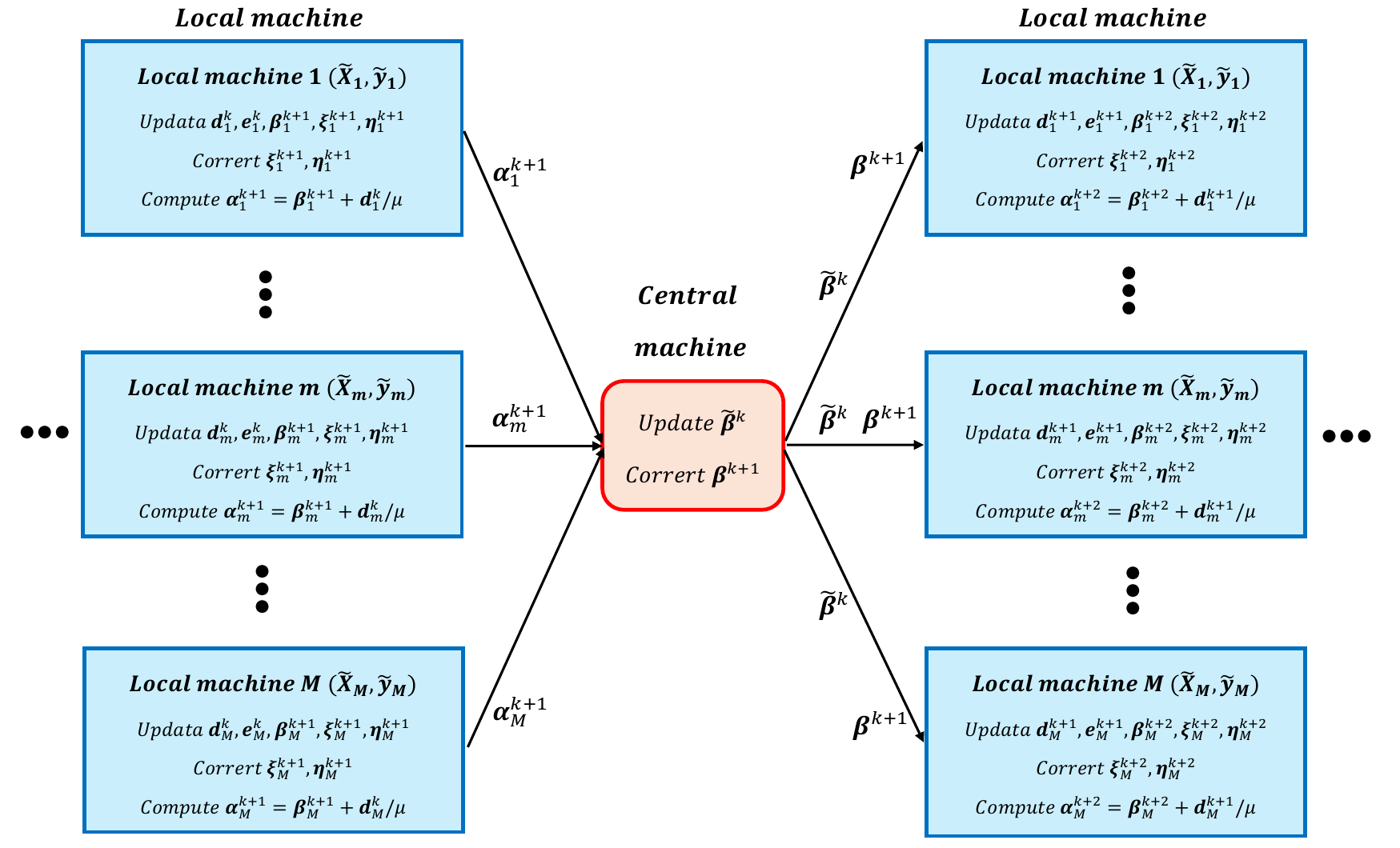}
    \caption{Schematic diagram of M-QPADM-slack(GB) algorithm.}
    \label{fig42}
\end{figure}

We summarize the process of M-QPADM-slack with Gaussian back substitution (M-QPADM-slack(GB)) in Algorithm \ref{alg3}, and its parallel algorithm diagram is shown in Figure \ref{fig42}.  In Algorithm \ref{alg3}, prior to commencing the $(k+1)$-th iteration, it is essential to first update the dual variables ($\bm{d}_m^k$ and $\bm{e}_m^k$) from the $k$-th iteration. This is achievable due to the sequence outlined in (\ref{mqpadm}), where the dual variable updates are positioned at the conclusion of each iteration. Specifically, the updates for the primal variables in the $(k+1)$-th iteration depend only on the dual variables $\bm{d}_m^k$ and $\bm{e}_m^k$ from the previous iteration, and not on the newly updated $\bm{d}_m^{k+1}$ and $\bm{e}_m^{k+1}$. By relocating the updates of $\bm{d}_m^k$ and $\bm{e}_m^k$, which are initially set to occur at the end of the $k$-th iteration, to the start of the $(k+1)$-th iteration, the incoherence in the sequence of iteration variables is effectively addressed. It is crucial to note that this rearrangement does not modify the iterative process delineated in (\ref{mqpadm}) within the execution of Algorithm \ref{alg3}.

For some nonconvex regularization terms, M-QPADM-slack(GB) can also be embedded in Algorithm \ref{alg2} to solve, simply replacing Algorithm \ref{alg1} in its step 2.2 with Algorithm \ref{alg3}. In addition,  Algorithm \ref{alg3} also inherits the convergence result of Theorem \ref{TH1}, with \(\bm b\) and \(\bm c\) in (\ref{abcd}) replaced by those in (\ref{mabcd}).

\begin{algorithm}\small
\caption{\small{Modified QPADM-slack with Gaussian Back Substitution}}
\label{alg3}
\begin{algorithmic}
\STATE {\textbf{Input:} $\bullet$ Central machine: $\mu, \nu,  M, \bm \lambda$; $\tilde{\bm \beta}^{-1}$ and ${\bm \beta}^{0}$.\\
\qquad \  \ \ \ \ $\bullet$ The $m$-th local machine:  $\boldsymbol{X}_m,\boldsymbol{y}_m$; $\mu$, $\nu$, $\tau$; $\bm \beta_m^0$, $\bm \xi_m^{-1}$, $\bm \xi_m^{0}$,$\bm \eta_m^{-1}$, $\bm \eta_m^{0}$, $\bm d_m^{-1}$, $\bm e_m^{-1}$.}
\STATE {\textbf{Output:} the total number of iterations $K$,  $\boldsymbol{\beta}^K$. }
\STATE {\textbf{while} not converged \textbf{do}}
\STATE {\ \textbf{Local machines}: \ \  for $m =1 ,2, \dots, M$ (in parallel) \\
\qquad \qquad \qquad  \qquad \quad \ 1. Receive $\tilde{\bm \beta}^{k-1}$ and ${\bm \beta}^{k}$  transmitted by the central machine, \\
\qquad \qquad \qquad  \qquad \quad \ 2. Update  $\bm d_m^{k} = \bm d_m^{k-1} - \mu (-\bm \beta_m^{k} + \tilde{\bm \beta}^{k-1} ) $ and $\bm e_m^{k} = \bm e_m^{k-1} - \mu( -\tilde{\bm y}_m + \tilde{\bm X}_m \bm \beta_m^{k} + \tilde{\bm \xi}_m^{k-1}  - \tilde{\bm \eta}_m^{k-1}  ) $,    \\
\qquad \qquad \qquad  \qquad \quad \ 3. Update  $\bm \beta_m^{k+1}$,  $\tilde{\bm \xi}_m^{k}$ and $\tilde{\bm \eta}_m^{k}$  according to  \eqref{betam}-\eqref{eta},\\
\qquad \qquad \qquad  \qquad \quad \ 4. Correct $\bm \xi_m^{k+1} $ and $\bm \eta_m^{k+1}$ according to (\ref{cr22}), \\
\qquad \qquad \qquad  \qquad \quad \ 5. Compute $\bm \alpha_m^{k+1} ={\bm \beta}_m^{k+1} + \bm d_m^{k}/\mu $ ,\\
\qquad \qquad \qquad  \qquad \quad \ 6. Send $\bm \alpha_m^{k+1}$ to the central machine.
}
\STATE {\ \textbf{Central machine}: 1. Receive $\bm \alpha_m^{k+1}$ transmitted by $M$ local machines,\\ \qquad \qquad \qquad  \qquad \quad \ 2. Update $\tilde{\bm \beta}^{k}$  according to \eqref{stbeta}, \\
\qquad \qquad \qquad  \qquad \quad \ 3.  Correct $\bm \beta^{k+1} $ according to  (\ref{cr22}),\\
\qquad \qquad \qquad  \qquad \quad \ 4. Send $\tilde{\bm \beta}^{k}$  and  ${\bm \beta}^{k+1}$ to the local machines.
}
\STATE {\textbf{end while}}
\STATE {\textbf{return} solution}.
\end{algorithmic}
\end{algorithm}

\subsection{Details of Iteration}
In this subsection, we describe the details of solving each subproblem in (\ref{mqpadm}). In fact, our M-QPADM-slack (\ref{mqpadm}) differs from the QPADM-slack in (\ref{threeupadmm}) only in the order of the iteration variables. Therefore, the solutions for each subproblem in (\ref{mqpadm}) are generally the same as those used in Section \ref{sec3}, except for the order of the iteration variables. For completeness, we also provide detailed iteration steps.

For the update of the subproblems $\bm \beta^{k+1}_m$,  $\tilde{\bm \xi}_m^{k}$ and $\tilde{\bm \eta}_m^{k}$  in  (\ref{mqpadm}), the minimization problem is quadratic and differentiable, allowing us to solve the subproblem by solving
\begin{align}
\bm \beta_m^{k+1} &= (\tilde{\bm X}_m^\top \tilde{\bm X}_m + \bm I_p)^{-1} \left[ (\bm \beta^{k} - {\bm d_m^k}/{\mu}) + \tilde{\bm X}_m^\top (\tilde{\bm y}_m - \bm \xi_m^{k} + \bm \eta_m^{k} + {\bm e_m^k}/{\mu}) \right], \label{betam}\\
\tilde{\bm \xi}_m^{k} &= \max \left\{ \tilde{\bm y}_m - \tilde{\bm X}_m \bm \beta_m^{k+1} + \bm \eta_m^{k} + {\bm e_m^k}/{\mu} - {\tau \bm 1_{n_m}}/{\mu}, \bm 0 \right\}, \label{xi}\\
\tilde{\bm \eta}_m^{k} &= \max \left\{ {(\tau-1) \bm 1_{n_m}}/{\mu} - (\tilde{\bm y}_m - \tilde{\bm X}_m \bm \beta_m^{k+1} - \tilde{\bm \xi}_m^{k} + {\bm e_m^k}/{\mu}), \bm 0 \right\}. \label{eta}
\end{align}
For the first equation,  \cite{Yu2017A}  suggested using the Woodbury matrix identity $(\tilde{\bm X}_m^\top \tilde{\bm X}_m + \bm I_p)^{-1} =  \bm I_p - \tilde{\bm X}_m^\top ( \bm I_{n_m} + \tilde{\bm X}_m \tilde{\bm X}_m^\top)^{-1} \tilde{\bm X}_m$. This method is actually very practical when the size of $n_m$ is small because the inverse only needs to be computed once throughout the ADMM iteration. 

For the subproblem of updating   $ \tilde{\bm \beta}^{k}$   in  (\ref{mqpadm}),  by rearranging the optimization equation and omitting some constant terms,   we get
\begin{align}\label{stbeta}
\tilde{\bm \beta}^{k} = \arg \min_{\boldsymbol\beta}  \left\{  P_{\bm \lambda}(|\bm \beta|)  + \frac{\mu M}{2} \left\| \bm \beta - \sum_{m=1}^M \bm \alpha_m^{k+1}/M  \right\|_2^2 \right\},
\end{align}
where  $\bm \alpha_m^{k+1} ={\bm \beta}_m^{k+1} + \bm d_m^{k}/\mu $. It is clear that the above expression represents a proximal operator, applicable to most convex regularization terms, such as LASSO and group LASSO (see \cite{Boyd2010Distributed}), elastic-net (see \cite{Parikh2013Proximal}), and sparse group LASSO (see \cite{Sprechmann2010CHiLasso}). Since the entire objective function is additive, the optimization problem can be transformed into smaller, independent subproblems during implementation.

\section{Numerical Results}\label{sec45}
This section is dedicated to demonstrating the efficacy of the proposed algorithm in solving classification and regression problems. It focuses on evaluating the algorithm's model selection capability, estimation accuracy, and computational efficiency. The algorithm, augmented with the Gaussian back substitution technique, is assessed in both non-parallel and parallel computing environments to showcase its robustness and scalability. To this end, the P-ADMM algorithm, combined with the Gaussian back substitution technique, is applied to various problems, including Penalized Quantile Regression (PQR, \cite{Fan2021Penalized}) and support vector machines (SVM) with sparse regularization (\cite{Guan2020An}).

To select the optimal values for the regularization parameters $\bm \lambda$, we employ the method proposed in \cite{Zhang2016Variable} and \cite{Yu2017A}, which minimizes the HBIC criterion. The HBIC criterion is defined as follows:
\begin{align}
\text{HBIC}(\bm \lambda) = \log\left(\sum_{i=1}^{n} \mathcal{L}(y_i - \bm x_i^\top \hat{\bm \beta}_{\bm \lambda})\right) + |S_{\bm \lambda}| \frac{\log(\log n)}{n} C_n.
\end{align}
Here, $\mathcal{L}$ represents the specific loss function, $\hat{\bm \beta}_{\bm \lambda}$ denotes the parameter estimates obtained from the non-convex estimation, and $|S_{\bm \lambda}|$ denotes the number of non-zero coordinates in $\hat{\bm \beta}_{\bm \lambda}$. The value of $C_n = 6 \log (p)$ is recommended by \cite{Peng2015An} and \cite{Fan2021Penalized}.
By minimizing the HBIC criterion, we can effectively select the optimal values for $\bm \lambda$  for convex and non-convex estimation. This selection method balances model complexity and goodness of fit for optimal estimation.
Moreover, reviewing the correction steps \eqref{cr} and \eqref{cr22}, an additional parameter \(\nu\) needs to be selected. According to the proof of Theorem \ref{TH1}, as long as \(\nu\) is within the interval (0, 1), it guarantees the convergence and the linear convergence rate of the two proposed algorithms. Empirically, we suggest setting \(\nu = 0.75\), as this choice assigns a weight of 0.75 to the current iteration variable and a weight of 0.25 to the previous iteration variable. This selection places primary emphasis on the current iteration solution while also appropriately considering the previous iteration solution.

For the initial vector in the iterative algorithm, all elements are initialized to 0.01, although other values are also feasible. For all tested ADMM algorithms, the maximum number of iterations was set to 500, and the following stopping criterion was employed:
\begin{align*}
\frac{\|\bm \beta^k - \bm \beta^{k-1} \|_2}{\max (1, \|\bm \beta^k \|_2)} \le 10^{-4}.
\end{align*}
This stopping criterion ensures that the difference between the estimated coefficients in consecutive iterations does not exceed the specified threshold.
All experiments were conducted on a computer equipped with an AMD Ryzen 9 7950X 16-core processor (clocked at 4.50 GHz) and 32 GB of memory, using the R programming language. For clarity, we denote QPADM-slack(GB) as the standard QRADM-slack with Gaussian back substitution  (Algorithm \ref{alg1}), and M-QPADM-slack(GB) as the modified QRADM-slack with Gaussian back substitution (Algorithm \ref{alg3}).

\subsection{Synthetic Data}\label{sec451}
In the first simulation, the P-ADMM algorithm with Gaussian back substitution proposed in this section is used to solve the $\ell_1$ quantile regression ($\ell_1$-QR, \cite{Li2008L1}) problem, and its performance is compared with several state-of-the-art algorithms, including QPADM proposed by \cite{Yu2017A} and QPADM-slack proposed by \cite{Fan2021Penalized}. We have included the experiment on nonconvex regularization terms (SCAD and MCP) in the Appendix \ref{4A1}.

Although \cite{Yu2017A} and \cite{Fan2021Penalized} provide R packages for their respective algorithms, these packages are only compatible with the Mac operating system. Furthermore, the package in \cite{Yu2017A} only provides the estimated coefficients, lacking information such as the number of iterations and computational time. To ensure a fair comparison, the R code for the QPADM and QPADM-slack algorithms was rewritten based on the descriptions in their respective papers.

In the simulation study of this section, the models used in the numerical experiments of \cite{Peng2015An}, \cite{Yu2017A}, and \cite{Fan2021Penalized} were adopted. Specifically, data were generated from the heteroscedastic regression model $y = x_6 + x_{12} + x_{15} + x_{20} + 0.7 x_1 \epsilon$, where $\epsilon \sim N(0,1)$. The independent variables $(x_1,x_2,\dots,x_p)$ were generated in two steps.

\begin{itemize}
\item First, $\bm \tilde{x} = (\tilde{x}_1, \tilde{x}_2, \dots, \tilde{x}_p)^\top$ was generated from a $p$-dimensional multivariate normal distribution $N(\bm 0, \bm \Sigma)$, where the covariance matrix $\bm \Sigma$ satisfies $\Sigma_{ij} = 0.5^{|i-j|}$ for $1 \le i,j \le p$.

\item Second, $x_1$ was set to $\Phi(\tilde x_1)$, while for $j=2,\dots,p$, we directly set $x_j = \tilde x_j$.
\end{itemize}

As stated by \cite{Yu2017A} and \cite{Fan2021Penalized}, \(x_1\) does not present when \(\tau = 0.5\). For simplicity, in the ensuing numerical experiments pertaining to regression, the default selection for $\tau$ is set to 0.7. In a non-parallel environment ($M=1$), this section simulates datasets of different sizes, specifically ($n$, $p$) = (30,000, 1,000), (1,000, 30,000), (10,000, 30,000), and (30,000, 30,000). In a parallel environment ($M \ge 2$), datasets of sizes ($n$, $p$) = (200,000, 500) and (500,000, 1,000) are simulated. For each setting, 100 independent simulations were conducted. The average results for non-parallel and parallel computations are presented in Tables \ref{tab41} and \ref{tab42}, respectively.

\begin{table}[H]
\caption{Comparison results of ADMM algorithms under different data scales with LASSO penalty}\label{tab41}
\renewcommand{\arraystretch}{1.5}
\resizebox{1\linewidth}{!}{
\begin{threeparttable}
\begin{tabular}{lcccccccccc}
\hline
$(n,p)$ &  \multicolumn{5}{c}{(30000,1000)}  &\multicolumn{5}{c}{(10000,30000)}  \\ 
\cmidrule(lr){2-6}\cmidrule(lr){7-11}
 &  P1 &  P2 &  AE   &  Ite   &  Time  &  P1 &  P2 &  AE   &  Ite   &  Time \\ \hline
QPADM        & 100 & 100 & \textbf{0.018(0.03)} & 107.9(12.6)& 41.35(6.78) & 86 & 100 & 4.010(0.71) & 500+(0.0)  & 2015.5(69.2)  \\
QPADMslack   & 100 & 100 & 0.058(0.06) & 56.7(6.04)& 37.56(5.27) & 84 & 100 & 4.196(1.05) & 279(28.6) & 1776.5(34.2)  \\
QPADM-slack(GB)      & 100 & 100 & 0.029(0.04) & 47.3(5.11)  & 35.17(5.30) & 90 & 100 & 3.616(0.86) & 233(20.8)  & 1031.9(23.1) \\
M-QPADM-slack(GB)       & 100 & 100 & 0.021(0.03) & \textbf{31.6(4.50)} & \textbf{24.42(3.08)} & \textbf{97} & 100 & \textbf{3.241(0.78)} & \textbf{189(16.6)} & \textbf{795.6(16.3)} \\
\midrule[1pt]
 $(n,p)$ & \multicolumn{5}{c}{(1000,30000)} & \multicolumn{5}{c}{(30000,30000)} \\ 
\cmidrule(lr){2-6}\cmidrule(lr){7-11}
 &  P1 &  P2 &  AE   &  Ite     &  Time  &  P1 &  P2 &  AE   &  Ite     &  Time\\ \hline
QPADM  & 76 & 100 & 8.012(1.05) & 500+(0.0)  & 2024.6(60.1) &100 & 100 & 1.701(0.33) & 500+(0.0)   & 3217.3(68.6)\\
QPADMslack  & 73 & 100 & 8.324(1.22) & 243(27.8) & 1526.7(46.9) &100 & 100 & 2.010(0.39) & 322(21.8)  & 2713.6(41.7) \\
QPADM-slack(GB)   &85& 100 & 8.107(1.09) & 213(24.3)  & 928.1(20.8) &100 & 100 & 1.929(0.36) & 299(18.8)   & 1928.3(37.5)\\
M-QPADM-slack(GB)   &\textbf{91}& 100 & \textbf{7.352(0.77)} & \textbf{166(11.9)} & \textbf{833.6(15.7)} &100 & 100 & \textbf{1.512(0.09)} &\textbf{ 227(14.7)}  & \textbf{1532.6(28.1)} \\ \hline
\end{tabular}
\begin{tablenotes}
        \footnotesize
        \item[*] The symbols in this table are defined as follows: P1 (\%) represents the proportion of times $x_1$ is selected; P2 (\%) represents the proportion of times $x_6$, $x_{12}$, $x_{15}$, and $x_{20}$ are selected; AE denotes the absolute estimation error; Ite indicates the number of iterations; and Time (s) refers to the running time. The numbers in parentheses represent the corresponding standard deviations, and the optimal results are highlighted in bold.
\end{tablenotes}
\end{threeparttable}}
\end{table}

The results in Table \ref{tab41} reveal that the Gaussian back substitution method significantly enhances the convergence rate of QPADM-slack, as evidenced by the reduction in the number of iterations (Ite). The QPADM-slack(GB) algorithm introduces an additional correction step compared to QPADM-slack, which involves matrix-vector multiplication. Consequently, despite achieving a notably lower iteration count, QPADM-slack(GB) does not exhibit a substantial advantage in computational time (Time, in seconds) over QPADM-slack. This is because during the correction step, there are frequent matrix multiplication operations by vectors. By contrast, the modified M-QPADM-slack(GB), which adjusts the sequence of variable updates, demonstrates statistically significant improvements in both iteration count (Ite) and computational time (Time). %
In terms of computational precision, specifically absolute estimation error (AE), QPADM performs the best when the dimensionality is small. However, as the dimensionality increases, the situation changes. While the AE of QPADM remains better than that of QPADM-slack, it performs slightly worse compared to QPADM-slack(GB) and M-QPADM-slack(GB). Furthermore, regarding variable selection performance (P1 and P2), all methods perform well when the sample size ($n$) is larger than or slightly smaller than the dimensionality ($p$). However, in scenarios where dimensionality significantly exceeds sample size, the Gaussian back substitution methods exhibit statistically superior performance compared to their counterparts.

\begin{figure}[H]
    \centering
    \subfigure[Estimation errors]{
    \label{Fig.sub.1}
    \includegraphics[width=0.45\linewidth]{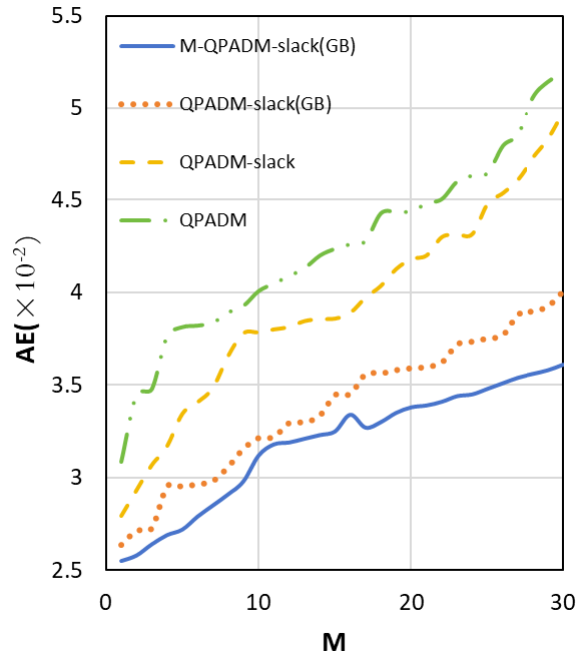}}
    \subfigure[Computation times]{
    \label{Fig.sub.2}
    \includegraphics[width=0.45\linewidth]{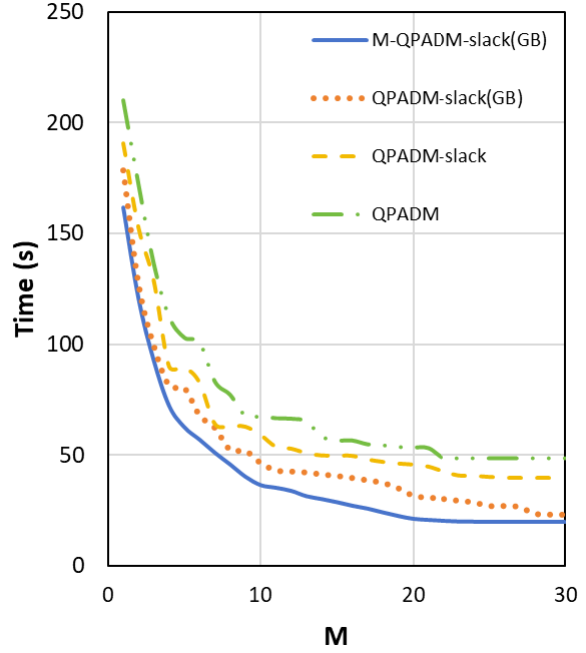}}
    \caption{Visualization of estimation errors and computation times for various P-ADMM algorithms as $M$ increases.}
    \label{fig43}
\end{figure}

In Table \ref{tab42}, as the number of local sub-machines ($M$) increases, a deteriorating trend is observed in both Nonzero and AE for all variants of P-ADMM, a numerical observation consistent with the findings in the numerical experiments reported by \cite{Fan2021Penalized}. Across different numbers of local sub-machines ($M$), M-QPADM-slack(GB) consistently surpasses other P-ADMM methods by a notable margin, exhibiting superior performance in both computational accuracy and efficiency.
This further supports the necessity of modifying the iteration sequence in QPADM-slack when Gaussian back substitution is incorporated.   

More detailed results on estimation errors and computation times are presented in Figure \ref{fig43}. 
Concerning computation time, we noticed that after $M$ surpasses 20, the tendency of computation time to decrease with the increase of $M$ diminishes and may stabilize. This is not a result of the algorithm struggling to manage a high number of sub-machines, but rather a constraint imposed by our computer's configuration. Using a machine with ample memory would prevent this issue from arising.

\begin{table}[H]
\caption{Comparison of different P-ADMM algorithms under the LASSO penalty}\label{tab42}
\renewcommand{\arraystretch}{1.5}
\resizebox{1\columnwidth}{!}{
\begin{threeparttable}
\begin{tabular}{lcccccccc}
\hline
  & \multicolumn{2}{l}{QPADM} & \multicolumn{2}{r}{$(200000,500)$} & \multicolumn{2}{l}{QPADM} & \multicolumn{2}{r}{$(500000,1000)$}\\ 
\cmidrule(lr){2-5}\cmidrule(lr){6-9}
M & Nonzero &  AE   &  Ite   &  Time  & Nonzero &  AE   &  Ite   &  Time   \\ \hline
 5  & 41.0(3.83) & {0.074(0.0009)} & 359.4(27.1) & 80.1(5.82) & 28.3(2.15) & 0.042(0.0006) & 442.1(37.0) & 177.2(12.6)  \\
10  & 44.5(4.01) & 0.071(0.0009) & 372.3(28.9) & 48.2(2.98) & 29.1(2.33) & 0.049(0.0007) & 471.2(40.8) & 87.6(6.63) \\
20 & 47.2(4.32) & 0.075(0.0011) & 405.2(32.6) & 29.7(1.64) & 32.2(2.01) & 0.052(0.0008) & 494.5(47.1) & 43.5(3.88)  \\\hline
   & \multicolumn{2}{l}{QPADM-slack}&\multicolumn{2}{r}{$(200000,500)$} & \multicolumn{2}{l}{QPADM-slack} & \multicolumn{2}{r}{$(500000,1000)$} \\  \cmidrule(lr){2-5}\cmidrule(lr){6-9}
M  &  Nonzero &  AE           &  Ite        &  Time & Nonzero & AE & Ite & Time \\\hline
 5  & 36.5(2.95) & 0.079(0.0010) & 255.6(22.0) & 61.3(3.56) & 25.2(1.92) & 0.049(0.0007) & 361.5(32.6) & 136.6(9.32)  \\
 10 & 39.9(3.06) & 0.080(0.0011) & 269.1(26.1) & 35.8(2.28) & 28.9(2.09) & 0.051(0.0008) & 379.9(36.8) & 78.4(5.17)   \\
20 & 42.3(3.16) & 0.083(0.0013) & 356.8(42.2) & {22.6(1.53)} & 31.6(2.23) & 0.055(0.0009) & 423.6(40.2) & 42.9(2.61)  \\\hline
   & \multicolumn{2}{l}{QPADM-slack(GB) } & \multicolumn{2}{r}{$(200000,500)$} & \multicolumn{2}{l}{QPADM-slack(GB)} & \multicolumn{2}{r}{$(500000,1000)$}\\   \cmidrule(lr){2-5}\cmidrule(lr){6-9}
M  &  Nonzero &  AE           &  Ite        &  Time  & Nonzero & AE & Ite & Time \\\hline
 5  & {25.5(2.20)} & 0.058(0.0006) & {195.6(12.6)}   & {49.5(3.03)} & {24.4(1.90)} & {0.033(0.0005)} & {258.1(23.2)} & {82.6(6.62)}    \\
 10 & {26.4(2.41)} & {0.062(0.0008)} &{203.0(13.4)}  & {29.8(1.63)}  & {25.6(1.97)} & {0.037(0.0005)} & {269.3(23.5)} & {50.2(3.88)}  \\
20 & {27.0(2.43)} & {0.065(0.0009)} & {211.9(13.5)}   & {15.7(0.92)}  & {26.2(2.04)} & {0.039(0.0006)} & {271.0(26.0)} & {35.1(2.12)}  \\\hline
 & \multicolumn{2}{l}{M-QPADM-slack(GB)  } & \multicolumn{2}{r}{$(200000,500)$} & \multicolumn{2}{l}{M-QPADM-slack(GB)} & \multicolumn{2}{r}{$(500000,1000)$}\\  \cmidrule(lr){2-5}\cmidrule(lr){6-9}
M  &  Nonzero &  AE           &  Ite        &  Time  & Nonzero & AE & Ite & Time  \\\hline
 5  & \bf{20.1(1.91)} & \bf{0.050(0.0005)} & \bf{148.5(9.31)}   & \bf{39.2(2.82)}  & \bf{15.21(1.44)} & \bf{0.027(0.0004)} & \bf{196.7(13.6)} & \bf{62.5(4.32)}  \\
 10 & \bf{21.6(1.97)} & \bf{0.053(0.0005)} &\bf{152.1(9.92)}  & \bf{22.2(1.73)}  & \bf{15.38(1.57)} & \bf{0.030(0.0005)} & \bf{199.6(13.4)} & \bf{36.4(2.73)}  \\
20 & \bf{22.3(2.01)} & \bf{0.054(0.0006)} & \bf{156.6(10.8)}   & \bf{12.9(0.95)}  & \bf{15.52(1.64)} & \bf{0.033(0.0005)} & \bf{201.1(13.9)} & \bf{20.2(1.51)}  \\\hline
\end{tabular}
\begin{tablenotes}
        \footnotesize
        \item[*] Since the values of all methods for metrics P1 and P2 are 100, they are not listed in Table \ref{tab42}. ``Nonzero" indicates the number of non-zero coefficients in the estimates. The numbers in parentheses represent the corresponding standard deviations, and the optimal results are shown in bold.
\end{tablenotes}
\end{threeparttable}}
\end{table}

\subsection{Real Data Experiment}\label{sec452}
In this section, the empirical analysis focuses on classification tasks using real-world data. The dataset employed is rcv1.binary, which consists of 47,236 features, 20,242 training samples, and 677,399 testing samples. This dataset is publicly accessible at \url{https://www.csie.ntu.edu.tw/~cjlin/libsvmtools/datasets/binary.html#rcv1.binary}.

In the subsequent experiments, the training samples are utilized to fit the model, where the data matrix $\bm X$ has dimensions $n=20,242$ (number of observations) and $p=47,236$ (number of features). This high-dimensional setting, where $p > n$, necessitates the use of regularization techniques to address potential overfitting. Specifically, an $\ell_1$ regularization term is incorporated to induce sparsity in the model, as many features are expected to be irrelevant for classification. The experiment on SVM with nonconvex regularization terms is included in Appendix \ref{4A2}.

As demonstrated in Proposition 1 of \cite{Wu2025Multi}, the optimization algorithms for piecewise linear classification and regression models are interchangeable. Consequently, both QPADM and QPADM-slack can be applied to solve the $\ell_1$-SVM problem (\cite{Zhu20031-norm}). Notably, the algorithm proposed in \cite{Guan2020An} reformulates the piecewise linear classification loss using slack variables, making it equivalent to QPADM-slack with the quantile parameter $\tau = 1$. Therefore, QPADM-slack ($\tau = 1$) is adopted as the baseline method for comparison in the following analysis.

To evaluate the algorithms, several performance metrics are defined. For the testing set, a random subsample of 10,000 observations is drawn from the 677,399 testing samples, yielding $n_{\text{test}}=10,000$. The metrics include: (1) ``Time", which measures the computational runtime of the algorithm; (2) ``Iteration", which records the number of iterations required for convergence; (3) ``Sparsity", defined as the proportion of zero coefficients (i.e., the number of zero coefficients divided by the total number of coefficients); (4) ``Train", which quantifies the classification accuracy on the training set; and (5) ``Test", which measures the classification accuracy on the testing set. Each experimental setting is independently simulated 100 times, and the average results are reported in Table \ref{tab43}. Since the previous experiments demonstrated that the Gaussian back substitution technique outperforms QPADM-slack in terms of iteration count and computational time, the results for ``Ite" and ``Time" are omitted in Table \ref{tab43}.

\begin{table}[!ht]\footnotesize
    \centering
    \renewcommand{\arraystretch}{1.5}
    \caption{Comparative analysis of Sparsity, training, and average testing accuracies (\%) for $\ell_1$ SVM using M-QPADM-slack(GB), QPADM-slack, and QPADM-slack(GB)}
    \begin{tabular}{llllllllll}
    \hline
          & \multicolumn{3}{c}{M-QPADM-slack(GB)} & \multicolumn{3}{c}{QPADM-slack(GB)} & \multicolumn{3}{c}{QPADM-slack}\\ 
        \cmidrule(lr){2-4}\cmidrule(lr){5-7}\cmidrule(lr){8-10}
        M & Sparsity & Train & Test & Sparsity & Train & Test & Sparsity & Train & Test \\ \hline
        2 & \bf{90.36} & \bf{99.42} & \bf{97.25} & 85.98 & 95.12 & 92.98 & 83.63 & 93.37 & 91.99 \\ 
        4 & \bf{90.36} & \bf{99.23} & \bf{97.15} & 84.60 & 94.55 & 92.02 & 81.36 & 92.25 & 91.04 \\ 
        6 & \bf{90.36} & \bf{99.15} & \bf{97.05} & 83.59 & 93.87 & 91.46 & 80.13 & 91.76 & 90.58 \\ 
        8 & \bf{90.31} & \bf{99.02} & \bf{96.91} & 82.10 & 93.03 & 90.93 & 79.36 & 91.04 & 89.91 \\ 
        10 & \bf{90.12} & \bf{98.89} & \bf{96.82} & 81.36 & 92.56 & 90.07 & 77.98 & 90.76 & 89.07 \\ 
        12 & \bf{90.01} & \bf{98.77} & \bf{96.69} & 80.75 & 91.87 & 89.38 & 77.12 & 90.01 & 88.36 \\ 
        14 & \bf{89.85} & \bf{98.65} & \bf{96.57} & 79.36 & 91.15 & 88.64 & 76.42 & 89.72 & 87.82 \\ 
        16 & \bf{89.58} & \bf{98.51} & \bf{96.41} & 78.69 & 90.36 & 87.39 & 73.95 & 88.36 & 87.05 \\ 
        18 & \bf{89.65} & \bf{98.43} & \bf{96.36} & 77.65 & 89.78 & 86.27 & 72.65 & 87.77 & 86.23 \\ 
        20 & \bf{89.55} & \bf{98.32} & \bf{96.27} & 76.25 & 88.62 & 85.33 & 71.45 & 87.65 & 85.16 \\ \hline
    \end{tabular}
    \label{tab43}
\end{table}

The numerical results in Table \ref{tab43} reveal that, not only in the regression numerical experiments, but also in the classification experiments, the performance of the three P-ADMM methods based on the consensus structure deteriorates as the number of local sub-machines ($M$) increases. This occurs because the consensus structure, designed to support the parallel algorithm, unavoidably introduces consensus constraints, namely auxiliary variables, as the number of local sub-machines grows. These additional auxiliary variables degrade the quality of the iterative solutions of the P-ADMM algorithm. In terms of classification accuracy (Train and Test) and variable selection (Sparsity), M-QPADM-slack(GB) performs the best, followed by QPADM-slack(GB), and lastly, QPADM-slack. This indicates that when solving  sparse regularization classification problems  using the slack-based P-ADMM algorithm, incorporating the Gaussian  back substitution technique can enhance both the efficiency and accuracy of the algorithm.

\section{Conclusions and Future Prospects}\label{sec46}

This paper introduces a Gaussian back substitution technique to adapt the parallel ADMM (P-ADMM) algorithms proposed by \cite{Guan2020An} and \cite{Fan2021Penalized}. Specifically, we incorporate minor linear adjustments at each iteration step, which, despite their minimal computational overhead, significantly enhance the algorithm's convergence speed. More importantly, this technique theoretically achieves linear convergence rates and demonstrates high efficiency and robustness in practical applications.

Furthermore, this paper proposes a new iteration variable sequence for P-ADMM with slack variables. When combined with the Gaussian back-substitution technique, this approach can significantly enhance computational efficiency. The new iteration sequence is designed to better align with the Gaussian back-substitution technique. By streamlining key steps in the algorithm to involve only basic addition and subtraction operations, we avoid complex matrix-vector multiplications, thereby improving overall efficiency. This is particularly advantageous for large datasets and high-dimensional spaces, as it avoids computational load and complexity. This modification not only preserves the algorithm's convergence properties but potentially accelerates the convergence process, offering a more efficient solution for practical applications.

Another significant contribution of this paper is the extension of quantile loss applications from traditional regression tasks to classification tasks. Given the intrinsic equivalence of quantile loss optimization forms in both classification and regression tasks, the proposed parallel algorithm can be seamlessly applied to quantile regression classification models. This extension provides a novel perspective and tool for addressing classification problems.

Despite the advancements presented in this paper, a challenges remains. To be specific, even with the incorporation of the Gaussian back-substitution technique, the quality of the solution and the algorithm's convergence speed deteriorate as the number of local sub-machines increases. This limitation arises from the consensus-based parallel structure, which complicates resource allocation in practical applications. Consequently, developing a parallel algorithm that circumvents the consensus structure emerges as a highly promising research direction. Such an algorithm could provide more robust and efficient solutions for larger datasets and higher-dimensional feature spaces.

\singlespacing
\section*{Acknowledgements}
We express our sincere gratitude to Professor Bingsheng He for engaging in invaluable discussions with us. His insights and expertise have greatly assisted us in effectively utilizing Gaussian back substitution to correct the convergence issues of the QPADM-slack algorithm. The research of Zhimin Zhang was supported by the National Natural Science Foundation of China [Grant Numbers 12271066, 12171405], and the research of Xiaofei Wu  was supported by the Scientific and Technological Research Program of Chongqing Municipal Education Commission [Grant Numbers KJQN202302003].

\footnotesize{
\bibliographystyle{elsarticle-harv}
\bibliography{myrefq}}

\onehalfspacing
\newpage
\section*{Online Appendix}
\appendix

\section{Proofs of Convergence Theorems}\label{4B}
\subsection{Preliminary}

\subsubsection{Lemma}
\begin{lem}(Lemma 2.1 in \cite{He2022A}) \label{lem41}
Let \( \mathbb{Z} \subset \mathbb{R}^l \) be a closed convex set, and let \( \theta : \mathbb{R}^l \to \mathbb{R} \) and \( f : \mathbb{R}^l \to \mathbb{R} \) be convex functions. Suppose \( f \) is differentiable on an open set containing \( \mathbb{Z} \), and the minimization problem
\[
\min \{ \theta(z) + f(z) \mid z \in \mathbb{Z} \}
\]
has a nonempty solution set. Then, \( z^* \in \arg \min \{ \theta(z) + f(z) \mid z \in \mathbb{Z} \} \) if and only if
\[
z^* \in \mathbb{Z} \quad \text{and} \quad \theta(z) - \theta(z^*) + (z - z^*)^\top \nabla f(z^*) \geq 0, \quad \forall z \in \mathbb{Z}.
\]
\end{lem}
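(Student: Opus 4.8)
The plan is to establish the stated equivalence as the standard first-order optimality characterization for a convex program whose objective splits into a nonsmooth convex part $\theta$ and a smooth convex part $f$ over the convex feasible set $\mathbb{Z}$. I would prove the two implications separately, and in both directions the key mechanism is that the gradient of the smooth part can be isolated as a directional derivative while the nonsmooth part is handled purely through convexity inequalities.

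For the reverse direction (the variational inequality implies optimality), I would start from the gradient inequality for the convex differentiable function $f$, namely $f(z) - f(z^*) \geq (z - z^*)^\top \nabla f(z^*)$ for all $z$, and add it to the assumed inequality $\theta(z) - \theta(z^*) + (z - z^*)^\top \nabla f(z^*) \geq 0$. The cross terms $\pm (z-z^*)^\top \nabla f(z^*)$ cancel, leaving $[\theta(z) + f(z)] - [\theta(z^*) + f(z^*)] \geq 0$ for every $z \in \mathbb{Z}$, which, together with the hypothesis $z^* \in \mathbb{Z}$, is exactly the statement that $z^*$ minimizes $\theta + f$ over $\mathbb{Z}$. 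This direction is immediate once the two inequalities are aligned.

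For the forward direction (optimality implies the variational inequality), I would fix an arbitrary $z \in \mathbb{Z}$ and exploit convexity of $\mathbb{Z}$ to form the feasible segment $z_t = (1-t) z^* + t z \in \mathbb{Z}$ for $t \in (0,1]$. Minimality of $z^*$ gives $\theta(z_t) + f(z_t) \geq \theta(z^*) + f(z^*)$. I would then linearize the nonsmooth term using convexity of $\theta$ --- specifically the chord inequality $\theta(z_t) \leq (1-t)\theta(z^*) + t\theta(z)$, which avoids any need for subdifferentials --- to obtain $t(\theta(z) - \theta(z^*)) + [f(z_t) - f(z^*)] \geq 0$. Dividing by $t > 0$ and letting $t \to 0^+$, the smooth part contributes its directional derivative $\lim_{t \to 0^+} t^{-1}[f(z^* + t(z-z^*)) - f(z^*)] = (z-z^*)^\top \nabla f(z^*)$, by differentiability of $f$ on an open set containing $\mathbb{Z}$. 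This yields $\theta(z) - \theta(z^*) + (z-z^*)^\top \nabla f(z^*) \geq 0$, and since $z \in \mathbb{Z}$ was arbitrary the variational inequality follows.

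The main obstacle is the limiting step in the forward direction: I must ensure the segment $z_t$ remains in $\mathbb{Z}$ (guaranteed by convexity), keep the nonsmooth term controlled purely through the chord inequality rather than through a derivative (since $\theta$ need not be differentiable), and justify passing to the one-sided limit so that only the differentiable part $f$ is linearized. The nonemptiness of the solution set is used only to ensure the characterization concerns a genuine minimizer; the equivalence itself requires no further regularity beyond convexity of $\theta$, convexity and differentiability of $f$, and closedness and convexity of $\mathbb{Z}$.
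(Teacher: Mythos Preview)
Your proof is correct and is the standard textbook argument for this optimality characterization. Note, however, that the paper does not supply its own proof of this lemma: it is quoted verbatim as Lemma~2.1 of \cite{He2022A} and used as a black box, so there is no ``paper's proof'' to compare against. Your treatment---gradient inequality for $f$ in the reverse direction, convex interpolation $z_t=(1-t)z^*+tz$ combined with the chord inequality for $\theta$ and a limit $t\to 0^+$ in the forward direction---is exactly how such results are established and would serve as a self-contained justification were one needed.
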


\begin{lem}\label{lem42}
Assume that \( \bm{G} \) and \( \bm{H} \in \mathbb{R}^{n \times n} \) are both positive definite matrices, and \( \bm{a}, \bm{b}, \bm{c}, \bm{d} \) are four arbitrary \( n \)-dimensional vectors. Then the following identities hold:
\begin{enumerate}
    \item \begin{align}\label{Id1}
    \| \bm{a} \|_{\bm{H}}^2 - \| \bm{b} \|_{\bm{H}}^2 = 2 \bm{a}^\top \bm{H} (\bm{a} - \bm{b}) - \| \bm{a} - \bm{b} \|_{\bm{H}}^2.
\end{align}
    \item \begin{align}\label{Id2}
    2 (\bm{a} - \bm{b})^\top \bm{H} (\bm{c} - \bm{d}) = (\| \bm{a} - \bm{d} \|_{\bm{H}}^2 - \| \bm{a} - \bm{c} \|_{\bm{H}}^2) + (\| \bm{c} - \bm{b} \|_{\bm{H}}^2 - \| \bm{d} - \bm{b} \|_{\bm{H}}^2).
 \end{align}
\end{enumerate}
\end{lem}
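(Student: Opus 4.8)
\textbf{Proof proposal for Lemma \ref{lem42}.}

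The plan is to verify both identities by straightforward expansion of the $\bm{H}$-norm $\|\bm{u}\|_{\bm{H}}^2 = \bm{u}^\top \bm{H} \bm{u}$, using only bilinearity of the map $(\bm{u},\bm{v}) \mapsto \bm{u}^\top \bm{H} \bm{v}$ and the symmetry $\bm{H} = \bm{H}^\top$. No positive-definiteness is actually needed for the identities themselves (it is only used elsewhere so that $\|\cdot\|_{\bm H}$ is a genuine norm), so the proof is purely algebraic.

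For part (1), I would start from the right-hand side and expand $\|\bm{a}-\bm{b}\|_{\bm{H}}^2 = \bm{a}^\top\bm{H}\bm{a} - 2\bm{a}^\top\bm{H}\bm{b} + \bm{b}^\top\bm{H}\bm{b}$ (using symmetry of $\bm{H}$ to combine the cross terms), and likewise write $2\bm{a}^\top\bm{H}(\bm{a}-\bm{b}) = 2\bm{a}^\top\bm{H}\bm{a} - 2\bm{a}^\top\bm{H}\bm{b}$. Subtracting, the cross terms $-2\bm{a}^\top\bm{H}\bm{b}$ cancel, leaving $2\bm{a}^\top\bm{H}\bm{a} - \bm{a}^\top\bm{H}\bm{a} - \bm{b}^\top\bm{H}\bm{b} = \|\bm{a}\|_{\bm{H}}^2 - \|\bm{b}\|_{\bm{H}}^2$, which is the left-hand side. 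One can also see part (1) as the special case $\bm{c}=\bm{a}$, $\bm{d}=\bm{b}$ of the polarization-type identity, but a direct expansion is cleanest.

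For part (2), I would expand each of the four squared norms on the right-hand side, e.g. $\|\bm{a}-\bm{d}\|_{\bm{H}}^2 = \bm{a}^\top\bm{H}\bm{a} - 2\bm{a}^\top\bm{H}\bm{d} + \bm{d}^\top\bm{H}\bm{d}$, and similarly for $\|\bm{a}-\bm{c}\|_{\bm{H}}^2$, $\|\bm{c}-\bm{b}\|_{\bm{H}}^2$, $\|\bm{d}-\bm{b}\|_{\bm{H}}^2$. Forming the combination $(\|\bm{a}-\bm{d}\|_{\bm{H}}^2 - \|\bm{a}-\bm{c}\|_{\bm{H}}^2) + (\|\bm{c}-\bm{b}\|_{\bm{H}}^2 - \|\bm{d}-\bm{b}\|_{\bm{H}}^2)$, the pure quadratic terms $\bm{a}^\top\bm{H}\bm{a}$, $\bm{b}^\top\bm{H}\bm{b}$, $\bm{c}^\top\bm{H}\bm{c}$, $\bm{d}^\top\bm{H}\bm{d}$ all cancel in pairs, and the surviving cross terms collect to $-2\bm{a}^\top\bm{H}\bm{d} + 2\bm{a}^\top\bm{H}\bm{c} + 2\bm{c}^\top\bm{H}\bm{b} - 2\bm{d}^\top\bm{H}\bm{b} = 2(\bm{a}-\bm{b})^\top\bm{H}(\bm{c}-\bm{d})$, again invoking symmetry of $\bm{H}$ to write $\bm{c}^\top\bm{H}\bm{b} = \bm{b}^\top\bm{H}\bm{c}$ and so on. There is no real obstacle here — the only thing to be careful about is bookkeeping of signs and making sure every cross term is paired off correctly; positive-definiteness plays no role in the derivation. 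This lemma is a standard tool (the ``cosine rule'' for $\bm{H}$-inner products) that will be used repeatedly in the main convergence proof to telescope inner-product terms into differences of norms.
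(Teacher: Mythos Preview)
Your proposal is correct and matches the paper's treatment: the paper simply states that the conclusion of this lemma is ``straightforward to verify'' without giving any details, and your direct expansion via bilinearity and symmetry of $\bm H$ is exactly that straightforward verification. (One tiny slip: in your collected cross terms for part~(2) the signs on the $\bm c^\top\bm H\bm b$ and $\bm d^\top\bm H\bm b$ terms should be swapped, but this is only a bookkeeping typo and does not affect the argument or the final identity.)
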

The conclusion of Lemma \ref{lem42} is straightforward to verify. Despite its simplicity, this result is widely utilized in the proof of convergence for ADMM, as detailed in \cite{He2018A} and \cite{He2022A}.

\subsubsection{Four Matrices}
To simplify the presentation of analysis, we  define the following four matrices,
\small{\begin{align}\label{m1}
\bm{Q} = \begin{bmatrix}
    \mu \bm{B}^\top \bm{B} & \bm{0} & \bm{0} \\
    \mu \bm{C}^\top \bm{B} & \mu \bm{C}^\top \bm{C} & \bm{0} \\
    - \bm{B} & - \bm{C} & \frac{\bm{I}}{\mu}
\end{bmatrix},
\
\bm{M} = \begin{bmatrix}
    \nu \bm{I} & - \nu (\bm{B}^\top \bm{B})^{-1} \bm{B}^\top \bm{C} & \bm{0} \\
    \bm{0} & \nu \bm{I} & \bm{0} \\
    - \mu \bm{B} & - \mu \bm{C} & \bm{I}
\end{bmatrix},
\end{align}}

\small{\begin{align}\label{m2}
\bm{H} = \begin{bmatrix}
    \frac{\mu}{\nu} \bm{B}^\top \bm{B} & \frac{\mu}{\nu} \bm{B}^\top \bm{C} & \bm{0} \\
    \frac{\mu}{\nu} \bm{C}^\top \bm{B} & \frac{\mu}{\nu} \left[ \bm{C}^\top \bm{C} + \bm{C}^\top \bm{B} (\bm{B}^\top \bm{B})^{-1} \bm{B}^\top \bm{C} \right] & \bm{0} \\
    \bm{0} & \bm{0} & \frac{\bm{I}}{\mu}
\end{bmatrix},
\bm{G} = \begin{bmatrix}
    (1 - \nu) \mu \bm{B}^\top \bm{B} & \bm{0} & \bm{0} \\
    \bm{0} & (1 - \nu) \mu \bm{C}^\top \bm{C} & \bm{0} \\
    \bm{0} & \bm{0} & \frac{\bm{I}}{\mu}
\end{bmatrix}.
\end{align}}
It is obvious that the matrices $\bm Q$, $\bm M$, $\bm H$ and $\bm G$ are all $3  \times 3$ partitioned, and they satisfy the condition  
\begin{align}\label{eq1}
\bm H \bm M = \bm Q  \  \text{and} \
 \bm G = (\bm Q^\top + \bm Q) -  \bm M^\top \bm H \bm M.
\end{align}
 It is not difficult to verify that both  $\bm H$ and  $\bm G$ are positive-definite matrices when $ \nu \in (0,1)$.

\subsubsection{Variational Inequality Characterization}
To transform the optimization objective function (\ref{intr3}) into an equality optimization problem, that is, to eliminate the nonnegative constraints of $\bm \xi_m \ge \bm 0$ and $\bm \eta_m \ge \bm 0$, we need to introduce the following  indicator function,
\[
I_{+}(\bm{x}) = \begin{cases}
0, & \text{if } \bm{x} \in \mathbb{R}^+, \\
\infty, & \text{if } \bm{x} \notin  \mathbb{R}^+.
\end{cases}
\]

If we consider the QPADM-Slack(GB) method, we define:
\[
\bm{a} = (\bm{\beta}, \bm{\xi}_1, \dots, \bm{\xi}_M), \quad \theta_1(\bm{a}) = P_\lambda(\bm{\beta}) + \sum_{m=1}^{M} \left[ \tau \bm{1}_{n_m}^\top \bm{\xi}_m + I_{+}(\bm{\xi}_m) \right]
\]
\[
\bm{b} = (\bm{\eta}_1, \dots, \bm{\eta}_M), \quad \theta_2(\bm{b}) = \sum_{m=1}^{M} \left[ (1 - \tau) \bm{1}_{n_m}^\top \bm{\eta}_m + I_{+}(\bm{\eta}_m) \right]
\]
\[
\bm{c} = (\bm{\beta}_1, \dots, \bm{\beta}_M), \quad \theta_3(\bm{c}) = 0
\]
\[
\bm{e} = (\bm{0}, \dots, \bm{0}, \bm{y}_1^\top, \dots, \bm{y}_M^\top)^\top
\]
Thus, from (\ref{conm}), equation (\ref{intr3}) can be rewritten as:
\begin{equation}
\begin{aligned}\label{mslack}
\min_{\bm{a}, \bm{b}, \bm{c}} \quad & \theta_1(\bm{a}) + \theta_2(\bm{b}) + \theta_3(\bm{c}), \\
\text{s.t.} \quad & \bm{A}\bm{a} + \bm{B}\bm{b} + \bm{C}\bm{c} = \bm{e},
\end{aligned}
\end{equation}
where the definitions of \(\bm{A}\), \(\bm{B}\), \(\bm{C}\), and \(\bm{e}\) are provided in Section \ref{sec31}.

If the modified update sequence as described in Section \ref{sec41} is employed, the variables are defined as follows:
\[
\bm{a} = (\bm{\beta}_1, \dots, \bm{\beta}_M), \quad \bm{b} = (\bm{\xi}_1, \dots, \bm{\xi}_M), \quad \bm{c} = (\bm{\eta}_1, \dots, \bm{\eta}_M, \bm{\beta}).
\]
Equation (\ref{intr3}) can then be rewritten as (\ref{mslack}), with the corresponding matrices \(\bm{A}\), \(\bm{B}\), and \(\bm{C}\) as detailed in Section \ref{sec41}.

Based on the variational inequality characterization discussed in Section 2.4.2 of this paper, the solution to the constrained optimization problem (\ref{mslack}) is the saddle point of the following Lagrangian function:
\begin{align}\label{la}
L(\bm{a}, \bm{b}, \bm{c}, \bm{d}) = \theta_1(\bm{a}) + \theta_2(\bm{b}) + \theta_3(\bm{c}) - \bm{d}^\top (\bm{Aa} + \bm{Bb} + \bm{Cc} - \bm{e}),
\end{align}
where \(\bm{d} = (\bm{d}_1, \dots, \bm{d}_M, \bm{e}_1, \dots, \bm{e}_M)\). However, since \(\theta_1\) and \(\theta_2\) are non-differentiable in this context, the variational inequality mentioned in \cite{He2018A} cannot be directly applied to our scenario. Recently, the work in \cite{He2022A} has covered the non-differentiable setting considered in this paper. As described in their Section 2.2, finding the saddle point of \(L(\bm{a}, \bm{b}, \bm{c}, \bm{d})\) is equivalent to finding \(\bm{a}^*, \bm{b}^*, \bm{c}^*, \bm{d}^*\) that satisfy the following variational inequality:
\begin{equation}\label{vi}
\bm{h}^* \in \Omega, \quad \theta(\bm{f}) - \theta(\bm{f}^*) + (\bm{h} - \bm{h}^*)^\top F(\bm{h}^*) \geq 0, \quad \forall \bm{h} \in \Omega,
\end{equation}
where \(\bm{f} = (\bm{a}, \bm{b}, \bm{c})\), \(\bm{h} = (\bm{a}, \bm{b}, \bm{c}, \bm{d})\), and
\begin{equation}
\theta(\bm{f}) = \theta_1(\bm{a}) + \theta_2(\bm{b}) + \theta_3(\bm{c}), \quad \Omega = \mathbb{R}^{(p + Mn)} \times \mathbb{R}^{(Mn)} \times \mathbb{R}^{(Mn)} \times \mathbb{R}^{(2M)},
\end{equation}
and
\begin{equation}\label{F}
F(\bm{h}) = \begin{pmatrix}
-\bm{A}^\top \bm{d} \\
-\bm{B}^\top \bm{d} \\
-\bm{C}^\top \bm{d} \\
\bm{Aa} + \bm{Bb} + \bm{Cc} - \bm{e}
\end{pmatrix}.
\end{equation}

It is important to note that the operator \(F\) defined in (\ref{F}) is antisymmetric , as
\begin{equation}\label{f}
(\bm{h}_1 - \bm{h}_2)^\top \left[F(\bm{h}_1) - F(\bm{h}_2)\right] \equiv 0, \quad \forall \ \bm{h}_1, \bm{h}_2 \in \Omega.
\end{equation}

In this chapter, \(\Omega^*\) denotes the solution set of (\ref{vi}), which is also the set of saddle points of the Lagrangian function (\ref{la}) for the model (\ref{intr3}).

\subsection{Proof}
 With the preparation above, we can now proceed to prove the convergence and convergence rate of QPADM-Slack with Gaussian backtracking (which includes both QPADM-Slack(GB) and M-QPADM-Slack(GB)). For brevity, both of these methods with Gaussian backtracking will hereafter be referred to as QPADM-Slack(GB).

The process of QPADM-Slack(GB) can be divided into two steps: The first step involves generating the predicted values \( \tilde{\bm a}^k \), \( \tilde{\bm b}^k \), \( \tilde{\bm c}^k \), and \( \tilde{\bm d}^k \); the second step involves correcting \( \tilde{\bm b}^k \), \( \tilde{\bm c}^k \), and \( \tilde{\bm d}^k \) to produce \( {\bm b}^{k+1} \), \( {\bm c}^{k+1} \), and \( {\bm d}^{k+1} \).

The specific iteration formulas for the first step (prediction step) are as follows:
\begin{small}
\begin{equation}\label{4priupdate}
\left\{
\begin{array}{l}
\tilde{\boldsymbol{a}}^{k}  =  \mathop{\arg \min }\limits_{\boldsymbol a} \left\{  \theta_1(\bm a ) - \bm a^\top \bm  A^\top \bm d^k + \frac{\mu}{2}\|\bm {Aa} +  \bm {B} \bm b^k + \bm {C} \bm c^k - \bm e   \|^2  \right\}; \\
\tilde{\boldsymbol{b}}^{k}  =  \mathop{\arg \min }\limits_{\boldsymbol b} \left\{  \theta_2(\bm b ) - \bm b^\top \bm  B^\top \bm d^k + \frac{\mu}{2}\|\bm {A} \tilde{\bm a}^k +  \bm {B} \bm b + \bm {C} \bm c^k - \bm e   \|^2  \right\}; \\
\tilde{\boldsymbol{c}}^{k}  =  \mathop{\arg \min }\limits_{\boldsymbol c} \left\{  \theta_3(\bm c ) - \bm c^\top \bm  C^\top \bm d^k + \frac{\mu}{2}\|\bm {A} \tilde{\bm a}^k +  \bm {B} \tilde{\bm b}^k + \bm {C} \bm c - \bm e   \|^2  \right\};\\
\tilde{\bm d}^k = \bm d^k - \mu (\bm {A} \tilde{\bm a}^k +  \bm {B} {\bm b}^k + \bm {C}{\bm c}^k - \bm e  );
\end{array}\right.
\end{equation}
\end{small}
where \(\bm d^k = (\bm u_1^k, \dots, \bm u_M^k, \bm v_1^k, \dots, \bm v_M^k)\). Here, the updates for \( \tilde{\bm a}^k \), \( \tilde{\bm b}^k \), and \( \tilde{\bm c}^k \) are essentially the same as those for \( {\bm a}^{k+1} \), \( {\bm b}^{k+1} \), and \( {\bm c}^{k+1} \) in QPADM-Slack, or as improved in Section \ref{sec41}. For the sake of convenience in subsequent proofs, the update method for \( \tilde{\bm d}^k \) differs from that in QPADM-Slack, primarily because it does not utilize the newly generated \( \tilde{\boldsymbol{b}}^{k} \) and \( \tilde{\boldsymbol{c}}^{k} \).

According to Lemma \ref{lem41}, we have
\begin{small}
\begin{equation}\label{vi1}
\left\{
\begin{array}{l}
\theta_1(\bm a ) -  \theta_1(\tilde{\bm a}^k ) + (\bm a - \tilde{\bm a}^k )^\top \left[-\bm A^\top \bm d^k + \mu \bm A^\top ( \bm {A} \tilde{\bm a}^k +  \bm {B} \bm b^k + \bm {C} \bm c^k - \bm e )   \right]; \\
\theta_2(\bm b ) -  \theta_2(\tilde{\bm b}^k ) + (\bm b - \tilde{\bm b}^k )^\top \left[-\bm B^\top \bm d^k + \mu \bm B^\top ( \bm {A} \tilde{\bm a}^k +  \bm {B} \tilde{ \bm b}^k + \bm {C} \bm c^k - \bm e )    \right]; \\
\theta_3(\bm c ) -  \theta_3(\tilde{\bm c}^k ) + (\bm c - \tilde{\bm c}^k )^\top \left[-\bm C^\top \bm d^k + \mu \bm C^\top ( \bm {A} \tilde{\bm a}^k +  \bm {B} \tilde{ \bm b}^k + \bm {C} \tilde{\bm c}^k - \bm e )    \right]; \\
(\bm d - \tilde{\bm d}^k)^\top \left[ (\tilde{\bm d}^k - {\bm d}^k)/\mu + \bm {A} \tilde{\bm a}^k +  \bm {B} {\bm b}^k + \bm {C}{\bm c}^k - \bm e  \right] = 0.
\end{array}\right.
\end{equation}
\end{small}
The last equation is derived from the final part of equation (\ref{4priupdate}). By combining the four equations in (\ref{vi1}), we obtain
\begin{align}\label{vi2} 
\theta(\bm f) - \theta( \tilde{ \bm f}^k) + (\bm h - \tilde{ \bm h}^k)^T F ( \tilde{ \bm h}^k) \geq  (\bm g - \tilde{ \bm g}^k)^T \bm Q  (\bm g^k - \tilde{ \bm g}^k),  \quad \forall \bm h \in \Omega,
\end{align}
where \( \bm{g} = (\bm{b}, \bm{c}, \bm{d}) \), and by definition, \(\bm{f}\) and \(\bm{g}\) are both components of \(\bm{h}\).

 The second step  (correction step) involves setting \(\bm a^{k+1} = \tilde{\boldsymbol{a}}^{k}\), followed by the equations:
\begin{align} \label{vi3}
\begin{bmatrix}
   {\bm b}^{k+1}\\
 {\bm c}^{k+1} \\
{\bm d}^{k+1}
\end{bmatrix}  =  \begin{bmatrix}
   {\bm b}^{k}\\
 {\bm c}^{k} \\
{\bm d}^{k}
\end{bmatrix} -   \begin{bmatrix}
    \nu \bm I & - \nu (\bm B^\top \bm B)^{-1} \bm B^\top \bm C & \bm 0  \\
    \bm 0 & \nu \bm I & \bm 0 \\
    - \mu \bm B &  - \mu \bm C   & {\bm I} 
\end{bmatrix}  \begin{bmatrix}
   {\bm b}^{k} -   \tilde{\bm b}^k \\
 {\bm c}^{k}  -  \tilde{\bm c}^k\\
{\bm d}^{k} -  \tilde{\bm d}^k
\end{bmatrix}. 
\end{align}
The first two rows of the above equation correspond to the correction steps for \(\tilde{\bm \xi}^k_m\) and \(\tilde{\bm \eta}^k_m\), while the third row adjusts \(\tilde{\bm d}^k\) to align with the expression for \(\bm d^{k+1}\) in QPADM-Slack. The adjustment in the third row is necessary because QPADM-Slack (GB) only requires corrections for \(\tilde{\bm \xi}^k_m\) and \(\tilde{\bm \eta}^k_m\). Hence, the correction step in  (\ref{vi3}) can be rewritten as
\begin{align}\label{vi4}
\bm{g}^{k+1} = \bm{g}^{k} - \bm{M} (\bm{g}^{k} - \tilde{\bm{g}}^{k}),
\end{align}
where the definition of the matrix \(\bm M\) is given in (\ref{m1}).

The above discussion indicates that the QPADM-slack (GB) algorithm can be simplified into two steps: the prediction step (see  (\ref{4priupdate})) and the correction step (see  (\ref{vi3})). The prediction and correction steps can also be expressed as:
\begin{align}
\theta(\bm f) - \theta( \tilde{ \bm f}^k) +  (\bm h - \tilde{ \bm h}^k)^\top F ( \tilde{ \bm h}^k) \geq & \  (\bm g - \tilde{ \bm g}^k)^\top \bm Q  (\bm g^k - \tilde{ \bm g}^k),  \quad \forall \bm h \in \Omega, \label{p} \\
\bm{g}^{k+1} = &  \ \bm{g}^{k} - \bm{M} (\bm{g}^{k} - \tilde{\bm{g}}^{k}). \label{c}
\end{align}

\subsubsection{Global Convergence}
Next, based on the predictive and corrective steps, we derive that the sequence \(\{\bm{g}^k\}\) exhibits a contraction property.
\begin{prop}\label{prop41}
For the sequence \(\{\bm{g}^k\}\) generated by the QPADM-slack(GB) algorithm (including Algorithm \ref{alg1} and Algorithm \ref{alg3}), the following inequality holds:
\begin{align}\label{res1}
\| \bm{g}^{k+1} - \bm{g}^* \|_{\bm{H}}^2 \leq \| \bm{g}^k - \bm{g}^* \|_{\bm{H}}^2 - \| \bm{g}^k - \tilde{\bm{g}}^k \|_{\bm{G}}^2, \quad \forall \bm{g}^* \in \Omega^*,
\end{align}
where \(k > 0\) and \(\bm{G}\) and \(\bm{H}\) are defined in (\ref{m2}).
\end{prop}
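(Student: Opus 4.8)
The plan is to follow the classical prediction--correction contraction argument of He--Yuan, exploiting the compact reformulations (\ref{p})--(\ref{c}) of QPADM-slack(GB) together with the algebraic relations (\ref{eq1}) among $\bm Q$, $\bm M$, $\bm H$, $\bm G$ and the positive definiteness of $\bm H$ and $\bm G$ for $\nu\in(0,1)$.

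\textbf{Step 1: a nonnegative cross term.} I would first set $\bm h=\bm h^*$ (with $\bm h^*=(\bm a^*,\bm b^*,\bm c^*,\bm d^*)\in\Omega^*$) in the prediction inequality (\ref{p}), obtaining
\[
\theta(\bm f^*)-\theta(\tilde{\bm f}^k)+(\bm h^*-\tilde{\bm h}^k)^\top F(\tilde{\bm h}^k)\ \ge\ (\bm g^*-\tilde{\bm g}^k)^\top \bm Q(\bm g^k-\tilde{\bm g}^k).
\]
On the other hand, inserting $\bm h=\tilde{\bm h}^k$ into the saddle-point variational inequality (\ref{vi}) characterizing $\Omega^*$ gives $\theta(\tilde{\bm f}^k)-\theta(\bm f^*)+(\tilde{\bm h}^k-\bm h^*)^\top F(\bm h^*)\ge 0$, and since $F$ in (\ref{F}) is affine with vanishing skew part in the sense of (\ref{f}), we have $(\tilde{\bm h}^k-\bm h^*)^\top F(\bm h^*)=(\tilde{\bm h}^k-\bm h^*)^\top F(\tilde{\bm h}^k)$. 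Adding the two inequalities cancels the $\theta$-terms and the $F$-terms and leaves $(\bm g^*-\tilde{\bm g}^k)^\top\bm Q(\bm g^k-\tilde{\bm g}^k)\le 0$, i.e.\ $(\tilde{\bm g}^k-\bm g^*)^\top\bm Q(\bm g^k-\tilde{\bm g}^k)\ge 0$.

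\textbf{Step 2: expand the $\bm H$-distance and collapse.} Next I would apply identity (\ref{Id1}) of Lemma \ref{lem42} with the matrix $\bm H$ to write
\[
\|\bm g^k-\bm g^*\|_{\bm H}^2-\|\bm g^{k+1}-\bm g^*\|_{\bm H}^2=2(\bm g^k-\bm g^*)^\top\bm H(\bm g^k-\bm g^{k+1})-\|\bm g^k-\bm g^{k+1}\|_{\bm H}^2.
\]
The correction rule (\ref{c}) gives $\bm g^k-\bm g^{k+1}=\bm M(\bm g^k-\tilde{\bm g}^k)$; substituting this and using $\bm H\bm M=\bm Q$ from (\ref{eq1}) rewrites the right-hand side as $2(\bm g^k-\bm g^*)^\top\bm Q(\bm g^k-\tilde{\bm g}^k)-(\bm g^k-\tilde{\bm g}^k)^\top\bm M^\top\bm H\bm M(\bm g^k-\tilde{\bm g}^k)$. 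Splitting $\bm g^k-\bm g^*=(\bm g^k-\tilde{\bm g}^k)+(\tilde{\bm g}^k-\bm g^*)$ in the first term, the contribution of $(\tilde{\bm g}^k-\bm g^*)$ is twice the quantity shown nonnegative in Step 1 and may be discarded, which yields
\[
\|\bm g^k-\bm g^*\|_{\bm H}^2-\|\bm g^{k+1}-\bm g^*\|_{\bm H}^2\ \ge\ (\bm g^k-\tilde{\bm g}^k)^\top\big[(\bm Q^\top+\bm Q)-\bm M^\top\bm H\bm M\big](\bm g^k-\tilde{\bm g}^k)=\|\bm g^k-\tilde{\bm g}^k\|_{\bm G}^2,
\]
where the final equality is the second relation in (\ref{eq1}); this is precisely (\ref{res1}).

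\textbf{Main obstacle.} The sign bookkeeping in Step 1 is the delicate point: one must evaluate both (\ref{p}) and the saddle-point inequality (\ref{vi}) at the \emph{same} predicted point $\tilde{\bm h}^k$, and the cancellation of the $F$-terms after summation relies essentially on the affine, zero-skew structure (\ref{f}) of $F$ rather than on any monotonicity estimate. The non-differentiability of $\theta_1,\theta_2$ is not an extra difficulty here, since it has already been absorbed into the variational forms (\ref{p}) and (\ref{vi}) via Lemma \ref{lem41} and the framework of \cite{He2022A}. The only remaining verification is that $\bm H\bm M=\bm Q$ and $\bm G=(\bm Q^\top+\bm Q)-\bm M^\top\bm H\bm M$, together with $\bm H,\bm G\succ 0$, hold for \emph{both} variable orderings (\ref{abcd}) and (\ref{mabcd}); this is a routine block computation using only that $\bm B^\top\bm B$ is an identity matrix in each case.
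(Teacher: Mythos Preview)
Your proposal is correct and follows essentially the same prediction--correction contraction argument as the paper. The only cosmetic difference is the order of operations: the paper first keeps $\bm h\in\Omega$ general, applies the four-vector identity (\ref{Id2}) to obtain the intermediate inequality (\ref{pr4}), and only then specializes to $\bm h=\bm h^*$ and invokes (\ref{vi})--(\ref{f}); you instead specialize to $\bm h^*$ at the outset, establish the nonnegative cross term first, and then expand via the two-vector identity (\ref{Id1}) and split $\bm g^k-\bm g^*$. The same ingredients---(\ref{p}), (\ref{c}), $\bm H\bm M=\bm Q$, $\bm G=(\bm Q^\top+\bm Q)-\bm M^\top\bm H\bm M$, and the skew relation (\ref{f})---are used in both, so the two routes are algebraically equivalent.
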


\begin{proof}
By applying \(\bm{H} \bm{M} = \bm{Q}\) (see  (\ref{eq1})) and the update expression in  (\ref{c}), the right-hand side of (\ref{p}) can be rewritten as:
\begin{align}\label{pr1}
\left[ \theta(\bm{f}) - \theta(\tilde{\bm{f}}^k) + (\bm{h} - \tilde{\bm{h}}^k)^\top \bm{F} (\tilde{\bm{h}}^k) \right] \geq (\bm{g} - \tilde{\bm{g}}^k)^\top \bm{H} (\bm{g}^k - \bm{g}^{k+1}), \quad \forall \bm{h} \in \Omega.
\end{align}
Using the identity in  (\ref{Id1}), let \(\bm{a} = \bm{g}\), \(\bm{b} = \tilde{\bm{g}}^k\), \(\bm{c} = \bm{g}^k\), and \(\bm{d} = \bm{g}^{k+1}\), we obtain:
\begin{align}\label{pr2}
(\bm{g} - \tilde{\bm{g}}^k)^\top \bm{H} (\bm{g}^k - \bm{g}^{k+1}) &= \frac{1}{2} \left( \|\bm{g} - \bm{g}^{k+1}\|_{\bm{H}}^2 - \|\bm{g} - \bm{g}^k\|_{\bm{H}}^2 \right) \nonumber \\
&+ \frac{1}{2} \left( \|\bm{g}^k - \tilde{\bm{g}}^k\|_{\bm{H}}^2 - \|\bm{g}^{k+1} - \tilde{\bm{g}}^k\|_{\bm{H}}^2 \right).
\end{align}
For the last term in  (\ref{pr2}), we have:
\begin{equation}\label{pr3}
\begin{aligned}
\|\bm{g}^k - \tilde{\bm{g}}^k\|_{\bm{H}}^2 - \|\bm{g}^{k+1} - \tilde{\bm{g}}^k\|_{\bm{H}}^2 &= \|\bm{g}^k - \tilde{\bm{g}}^k\|_{\bm{H}}^2 - \|(\bm{g}^k - \tilde{\bm{g}}^k) - (\bm{g}^k - \bm{g}^{k+1})\|_{\bm{H}}^2 \\
&\overset{(\ref{c})}{=} \|\bm{g}^k - \tilde{\bm{g}}^k\|_{\bm{H}}^2 - \|(\bm{g}^k - \tilde{\bm{g}}^k) - \bm{M} (\bm{g}^k - \tilde{\bm{g}}^k)\|_{\bm{H}}^2 \\
&= 2 (\bm{g}^k - \tilde{\bm{g}}^k)^\top \bm{H} \bm{M} (\bm{g}^k - \tilde{\bm{g}}^k) - (\bm{g}^k - \tilde{\bm{g}}^k)^\top \bm{M}^\top \bm{H} \bm{M} (\bm{g}^k - \tilde{\bm{g}}^k) \\
&= (\bm{g}^k - \tilde{\bm{g}}^k)^\top (\bm{Q}^\top + \bm{Q} - \bm{M}^\top \bm{H} \bm{M}) (\bm{g}^k - \tilde{\bm{g}}^k) \\
&\overset{(\ref{m2})}{=} \|\bm{g}^k - \tilde{\bm{g}}^k\|_{\bm{G}}^2,
\end{aligned}
\end{equation}
where the second-to-last equality holds due to \(\bm{H} \bm{M} = \bm{Q}\) and \(2 \bm{g}^\top \bm{Q} \bm{g} = \bm{g}^\top (\bm{Q}^\top + \bm{Q}) \bm{g}\).

Substituting (\ref{pr2}) and (\ref{pr3}) into  (\ref{pr1}), we obtain:
\begin{equation}\label{pr4}
\begin{aligned}
\left[ \theta(\bm{f}) - \theta(\tilde{\bm{f}}^k) + (\bm{h} - \tilde{\bm{h}}^k)^\top \bm{F} (\tilde{\bm{h}}^k) \right] &\geq \frac{1}{2} \left( \|\bm{g} - \bm{g}^{k+1}\|_{\bm{H}}^2 - \|\bm{g} - \bm{g}^k\|_{\bm{H}}^2 \right) \\
&+ \frac{1}{2} \|\bm{g}^k - \tilde{\bm{g}}^k\|_{\bm{G}}^2, \quad \forall \bm{h} \in \Omega.
\end{aligned}
\end{equation}
Let \(\bm{g} = \bm{g}^*\) and $\bm{f} = \bm{f}^*$ in  (\ref{pr4}), which yields
\begin{equation}\label{pr5}
\begin{aligned}
&\|\bm{g}^k - \bm{g}^*\|_{\bm{H}}^2 - \|\bm{g}^{k+1} - \bm{g}^*\|_{\bm{H}}^2 \\
&\geq \|\bm{g}^k - \tilde{\bm{g}}^k\|_{\bm{G}}^2 + 2 \left[ \theta(\tilde{\bm{f}}^k) - \theta(\bm{f}^*) + (\tilde{\bm{h}}^k - \bm{h}^*)^\top \bm{F} (\tilde{\bm{h}}^k) \right].
\end{aligned}
\end{equation}
By the optimality of \(\bm{g}^*\) and the monotonicity of \(\bm{F}({\bm{h}})\) (see  (\ref{f})), we have:
\begin{equation}\label{pr6}
\begin{aligned}
&\theta(\tilde{\bm{f}}^k) - \theta(\bm{f}^*) + (\tilde{\bm{h}}^k - \bm{h}^*)^\top \bm{F} (\tilde{\bm{h}}^k) = \theta(\tilde{\bm{f}}^k) - \theta(\bm{f}^*) + (\tilde{\bm{h}}^k - \bm{h}^*)^\top \bm{F} ({\bm{h}^*}) \geq 0.
\end{aligned}
\end{equation}
The conclusion in  (\ref{res1}) follows directly from  (\ref{pr5}) and (\ref{pr6}).
\end{proof}

The contraction property mentioned above is crucial for the convergence of the sequence. The proof of sequence convergence derived from (\ref{res1}) has been extensively documented in the literature, including Theorem 2 in \cite{He2018A} and Theorem 4.1 in \cite{He2022A}. For completeness, this section provides a detailed proof process here.

\begin{thm}
For the sequence $\{ \bm{g}^k \}$ generated by the QPADM-slack(GB) algorithm (including Algorithms \ref{alg1} and \ref{alg3}), it will converge to some $\bm{g}^\infty$ belonging to $\mathcal{V}^*$, where $\mathcal{V}^* = \{ (\bm{b}^*, \bm{c}^*, \bm{d}^*) \mid (\bm{a}^*, \bm{b}^*, \bm{c}^*, \bm{d}^*) \in \Omega^* \}$.
\end{thm}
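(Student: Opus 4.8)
The plan is to derive the convergence of $\{\bm{g}^k\}$ directly from the contraction inequality \eqref{res1} established in Proposition \ref{prop41}, following the standard argument for contractive ADMM-type schemes (as in Theorem 2 of \cite{He2018A} and Theorem 4.1 of \cite{He2022A}). First I would observe that \eqref{res1} immediately gives two consequences: (i) the sequence $\{\|\bm{g}^k - \bm{g}^*\|_{\bm H}^2\}$ is monotonically nonincreasing for every $\bm{g}^* \in \Omega^*$, hence bounded, so $\{\bm{g}^k\}$ itself is bounded since $\bm H$ is positive definite; and (ii) summing \eqref{res1} over $k = 0, 1, 2, \dots$ telescopes, yielding $\sum_{k=0}^\infty \|\bm{g}^k - \tilde{\bm{g}}^k\|_{\bm G}^2 \le \|\bm{g}^0 - \bm{g}^*\|_{\bm H}^2 < \infty$, which forces $\|\bm{g}^k - \tilde{\bm{g}}^k\|_{\bm G}^2 \to 0$. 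Since $\bm G$ is positive definite when $\nu \in (0,1)$, this means $\bm{g}^k - \tilde{\bm{g}}^k \to \bm 0$, and consequently from the correction step \eqref{c}, $\bm{g}^{k+1} - \bm{g}^k = -\bm M(\bm{g}^k - \tilde{\bm{g}}^k) \to \bm 0$ as well.

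Next I would extract a convergent subsequence $\{\bm{g}^{k_j}\}$ from the bounded sequence, with limit $\bm{g}^\infty$. Because $\bm{g}^{k_j} - \tilde{\bm{g}}^{k_j} \to \bm 0$, the corresponding predicted points $\tilde{\bm{g}}^{k_j}$ also converge to $\bm{g}^\infty$; the same holds for the $\bm{a}$-component via the $\bm{a}$-subproblem (since $\bm{a}^{k+1} = \tilde{\bm{a}}^k$ and the optimality system ties $\tilde{\bm{a}}^k$ to the other blocks). Passing to the limit in the variational inequality \eqref{p} — using the closedness of $\Omega$, continuity of $F$, and lower semicontinuity of $\theta$ — the right-hand side $(\bm g - \tilde{\bm g}^{k_j})^\top \bm Q (\bm g^{k_j} - \tilde{\bm g}^{k_j})$ tends to $0$, so the limit point satisfies $\theta(\bm f) - \theta(\bm f^\infty) + (\bm h - \bm h^\infty)^\top F(\bm h^\infty) \ge 0$ for all $\bm h \in \Omega$; that is, $\bm h^\infty \in \Omega^*$ and hence $\bm g^\infty \in \mathcal{V}^*$.

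Finally I would upgrade subsequential convergence to full convergence by a now-standard argument: since $\bm{g}^\infty \in \mathcal{V}^*$, inequality \eqref{res1} applies with $\bm{g}^* = \bm{g}^\infty$, so $\{\|\bm{g}^k - \bm{g}^\infty\|_{\bm H}^2\}$ is monotonically nonincreasing; it has a subsequence converging to $0$ (along $k_j$), therefore the whole sequence $\|\bm{g}^k - \bm{g}^\infty\|_{\bm H}^2 \to 0$, i.e. $\bm{g}^k \to \bm{g}^\infty$. The main obstacle I anticipate is the limit-passage in the variational inequality when $\theta_1$ and $\theta_2$ are nonsmooth (they contain the indicator terms $I_+(\cdot)$): one must argue carefully that the feasibility-type constraints $\bm\xi_m \ge \bm 0$, $\bm\eta_m \ge \bm 0$ are preserved in the limit and that $\theta$ is lower semicontinuous so that $\liminf_j \theta(\tilde{\bm f}^{k_j}) \ge \theta(\bm f^\infty)$ goes the right direction — but this is exactly the setting covered by \cite{He2022A}, so the remaining work is routine bookkeeping rather than a genuine difficulty.
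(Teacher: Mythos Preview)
Your proposal is correct and follows essentially the same contraction-based route as the paper's own proof: both deduce boundedness of $\{\bm g^k\}$ and $\bm g^k - \tilde{\bm g}^k \to \bm 0$ from \eqref{res1}, pass to a subsequential limit in the variational inequality \eqref{p} to show the cluster point lies in $\mathcal V^*$, and then use Fej\'er monotonicity with $\bm g^* = \bm g^\infty$ to upgrade to full convergence. If anything, your treatment of the limit passage is slightly more careful than the paper's, which simply invokes ``continuity of $\theta(\bm f)$'' whereas you correctly note that only lower semicontinuity is available and that feasibility of the iterates (so that the indicator terms vanish) is what closes the gap.
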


\begin{proof}
Based on (\ref{res1}), the sequence $\{ \bm{g}^k \}$ is bounded, and
\begin{align} \label{c1}
\lim_{k \to \infty} { \| \bm{g}^k - \tilde{\bm{g}}^k \| }_{\bm G} = 0.
\end{align}
Hence, $\{ \tilde{\bm{g}}^k \}$ is also bounded. Let $\bm{g}^\infty$ be an accumulation point of $\{ \tilde{\bm{g}}^k \}$ and $\{ \tilde{\bm{g}}^{k_j} \}$ be a subsequence converging to $\bm{g}^\infty$.

Recalling the statement in (\ref{vi2}), the sequences $\{ \tilde{\bm g}_k \}$ and $\{ \tilde{\bm g}_{k_j} \}$ are associated with $\{ \tilde{\bm h}_k \}$ and $\{ \tilde{\bm h}_{k_j} \}$ respectively. According to (\ref{p}), we have
\[
\theta(\bm f) - \theta( \tilde{ \bm f}^{k_j}) +  (\bm h - \tilde{ \bm h}^{k_j})^\top F ( \tilde{ \bm h}^{k_j}) \geq  \  (\bm g - \tilde{ \bm g}^{k_j})^\top \bm Q  (\bm g^k - \tilde{ \bm g}^{k_j}), \quad \forall \bm h \in \Omega.
\]
Noting that the matrix $\bm{G}$ is nonsingular (see  (\ref{m1})), this implies $\lim_{k \to \infty}(\bm{g}^k - \tilde{\bm{g}}^{k_j}) = 0$. Due to the continuity of $\theta(\bm{f})$ and $F(\bm{h})$, we obtain
\[
\bm{h}^\infty \in \Omega, \quad \theta(\bm{f}) - \theta(\bm{f}^\infty) + (\bm{h} - \bm{h}^\infty)^\top F(\bm{h}^\infty) \geq 0, \quad \forall \bm{h} \in \Omega.
\]
The above variational inequality indicates that $\bm{h}^\infty$ is a solution point of (\ref{vi}). Together with (\ref{res1}), we obtain  \begin{align}\label{eq22}
    \| \bm g^{k+1} -\bm g^\infty \|_H \leq \|\bm g^k -\bm g^\infty \|_H.
\end{align}

Furthermore, according to (\ref{c1}) and the fact that $\lim\limits_{j \to \infty} \tilde{\bm g}^{k_j} = \bm g^\infty$, the subsequence $\{\bm g^{k_j}\}$ also converges to $\bm g^\infty$. Then, in conjunction with (\ref{eq22}), it can be concluded that $\bm g^{k}$ does not possess more than one cluster point, thereby establishing the convergence of the sequence $\{\bm g^k\}$ to $\bm g^\infty$.
\end{proof}

\subsubsection{Linear Convergence Rate}
Here, we demonstrate that a worst-case convergence rate of \( \mathcal{O}\left({1}/{K}\right) \) in a non-ergodic sense can be established for QPADM-slack using Gaussian back substitution. To do this, we first need to prove the following proposition.
\begin{prop}\label{prop2}
For the sequence \(\{ \bm{g}^k \}\) generated by the QPADM-slack(GB) algorithm (which includes Algorithm \ref{alg1} and Algorithm \ref{alg3}), we have
\begin{align}
\| \bm{g}^{k+1} - \bm{g}^{k+2} \|_{\bm {H}}^2   \leq & \| \bm{g}^k - \bm{g}^{k+1} \|_{\bm {H}}^2, \label{res2}
\end{align}
where \( k > 0 \), and \( \bm{H} \) is defined in (\ref{m2}).
\end{prop}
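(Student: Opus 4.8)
The plan is to use the ``crossed test points'' argument that always accompanies the contraction estimate \eqref{res1} in prediction--correction ADMM, relying only on the prediction inequality \eqref{p}, the correction step \eqref{c}, the antisymmetry \eqref{f} of $F$, and the identities $\bm H\bm M=\bm Q$ and $\bm G=(\bm Q^\top+\bm Q)-\bm M^\top\bm H\bm M$ from \eqref{eq1}. First I would invoke \eqref{p} at index $k$ with test point $\bm h=\tilde{\bm h}^{k+1}\in\Omega$, and again at index $k+1$ with test point $\bm h=\tilde{\bm h}^{k}\in\Omega$; adding the two inequalities, the $\theta$-terms cancel and the $F$-terms collapse to $(\tilde{\bm h}^{k}-\tilde{\bm h}^{k+1})^\top[F(\tilde{\bm h}^{k})-F(\tilde{\bm h}^{k+1})]$, which vanishes by \eqref{f}. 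This leaves the cross inequality
\begin{align*}
(\tilde{\bm g}^{k}-\tilde{\bm g}^{k+1})^\top\bm Q\big[(\bm g^{k}-\tilde{\bm g}^{k})-(\bm g^{k+1}-\tilde{\bm g}^{k+1})\big]\ \ge\ 0 .
\end{align*}

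Next I would translate this into the corrected iterates. With $\bm v^{j}:=\bm g^{j}-\tilde{\bm g}^{j}$, the correction \eqref{c} gives $\bm g^{j}-\bm g^{j+1}=\bm M\bm v^{j}$, so setting $\bm\Delta:=\bm v^{k}-\bm v^{k+1}$ one has $\tilde{\bm g}^{k}-\tilde{\bm g}^{k+1}=\bm M\bm v^{k}-\bm\Delta$ and $(\bm g^{k}-\tilde{\bm g}^{k})-(\bm g^{k+1}-\tilde{\bm g}^{k+1})=\bm\Delta$. Substituting these, and using $\bm Q=\bm H\bm M$, $\bm H=\bm H^\top$ and $\bm M^\top\bm H=(\bm H\bm M)^\top=\bm Q^\top$, the cross inequality becomes
\begin{align*}
2(\bm g^{k}-\bm g^{k+1})^\top\bm H\bm M\bm\Delta\ \ge\ 2\,\bm\Delta^\top\bm H\bm M\bm\Delta\ =\ \bm\Delta^\top(\bm Q+\bm Q^\top)\bm\Delta .
\end{align*}
Then, since $\bm g^{k+1}-\bm g^{k+2}=\bm M\bm v^{k+1}=(\bm g^{k}-\bm g^{k+1})-\bm M\bm\Delta$, expanding $\|\cdot\|_{\bm H}^2$ by the identity \eqref{Id1} gives $\|\bm g^{k+1}-\bm g^{k+2}\|_{\bm H}^2=\|\bm g^{k}-\bm g^{k+1}\|_{\bm H}^2-\big(2(\bm g^{k}-\bm g^{k+1})^\top\bm H\bm M\bm\Delta-\bm\Delta^\top\bm M^\top\bm H\bm M\bm\Delta\big)$, and by the previous display the parenthesized quantity is at least $\bm\Delta^\top[(\bm Q+\bm Q^\top)-\bm M^\top\bm H\bm M]\bm\Delta=\bm\Delta^\top\bm G\bm\Delta\ge 0$, using \eqref{eq1} and the positive definiteness of $\bm G$ for $\nu\in(0,1)$. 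This is exactly \eqref{res2}.

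I do not expect a substantive obstacle: this is the standard companion to \eqref{res1}, and all the structural ingredients are already in hand. The only point requiring care is the bookkeeping --- placing the two crossed test points in the correct slots of \eqref{p} (both $\tilde{\bm h}^{k}$ and $\tilde{\bm h}^{k+1}$ lie in $\Omega$ since they are generated iterates, so this is legitimate) and tracking the transpositions when interchanging $\bm Q$, $\bm H\bm M$ and $\bm M^\top\bm H$, so that the residual quadratic form reduces precisely to $\bm\Delta^\top\bm G\bm\Delta$. Equivalently one could first isolate the intermediate estimate $(\bm g^{k}-\bm g^{k+1})^\top\bm H\bm M(\bm v^{k}-\bm v^{k+1})\ge\tfrac12\|\bm M(\bm v^{k}-\bm v^{k+1})\|_{\bm H}^2$ and then apply \eqref{Id1}, but I would keep the derivation above since it reuses verbatim the machinery already set up for Proposition \ref{prop41}.
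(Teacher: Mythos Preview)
Your proposal is correct and follows essentially the same route as the paper: both arguments apply \eqref{p} at indices $k$ and $k+1$ with the crossed test points $\tilde{\bm h}^{k+1}$ and $\tilde{\bm h}^{k}$, cancel the $\theta$-- and $F$--terms via antisymmetry to obtain $(\tilde{\bm g}^{k}-\tilde{\bm g}^{k+1})^\top\bm Q\,\bm\Delta\ge 0$, and then use the correction relation together with $\bm H\bm M=\bm Q$ and $\bm G=(\bm Q^\top+\bm Q)-\bm M^\top\bm H\bm M$ so that the residual quadratic form is $\bm\Delta^\top\bm G\,\bm\Delta\ge 0$. The only cosmetic difference is that the paper adds $\bm\Delta^\top\bm Q\,\bm\Delta$ to both sides and then invokes $\bm Q\bm M^{-1}=\bm H$, whereas you substitute $\tilde{\bm g}^{k}-\tilde{\bm g}^{k+1}=\bm M\bm v^{k}-\bm\Delta$ directly; these are the same manipulation.
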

\begin{proof}
First, by setting \( \bm h = \tilde{\bm h}^{k+1} \) in (\ref{p}), we obtain
\begin{align}\label{cr1}
\theta( \tilde{\bm f}^{k+1}) - \theta( \tilde{ \bm f}^k) +  (\tilde{\bm h}^{k+1}  - \tilde{ \bm h}^k)^\top F ( \tilde{ \bm h}^k) \geq & \  (\tilde{\bm g}^{k+1}  - \tilde{ \bm g}^k)^\top \bm Q  (\bm g^k - \tilde{ \bm g}^k). 
\end{align}
Note that (\ref{p}) is also true for $k = k + 1$. Thus, we also have
\begin{align*}
\theta(\bm f) - \theta( \tilde{ \bm f}^{k+1}) +  (\bm h - \tilde{ \bm h}^{k+1})^\top F ( \tilde{ \bm h}^{k+1}) \geq & \  (\bm g - \tilde{ \bm g}^{k+1})^\top \bm Q  (\bm g^{k+1} - \tilde{ \bm g}^{k+1}),  \ \forall \bm h \in \Omega. 
\end{align*}
Setting $\bm h = \tilde{\bm h}^k$ in the above inequality, we obtain
\begin{align}\label{cr2}
\theta(\tilde{ \bm f}^k ) - \theta( \tilde{ \bm f}^{k+1}) +  ( \tilde{ \bm h}^k - \tilde{ \bm h}^{k+1})^\top F ( \tilde{ \bm h}^{k+1}) \geq & \  ( \tilde{ \bm g}^k - \tilde{ \bm g}^{k+1})^\top \bm Q  (\bm g^{k+1} - \tilde{ \bm g}^{k+1}),  \ \forall \bm h \in \Omega. 
\end{align}
By adding (\ref{cr1}) and (\ref{cr2}) , and utilizing the antisymmetry of \( F \)  in (\ref{f}), we obtain
\begin{align}\label{cr3}
(\tilde{\bm g}^k - \tilde{\bm g}^{k+1})^\top \bm Q \left[ (\bm g^k - \tilde{\bm g}^k) - (\bm g^{k+1} - \tilde{\bm g}^{k+1}) \right] \geq 0.
\end{align}
By adding the term
\[
 \left[ (\bm g^k - \tilde{\bm g}^k) - (\bm g^{k+1} - \tilde{\bm g}^{k+1}) \right]^\top \bm Q \left[ (\bm g^k - \tilde{\bm g}^k) - (\bm g^{k+1} - \tilde{\bm g}^{k+1}) \right]
\]
to both sides of (\ref{cr3}), and using \( \bm g^\top \bm Q \bm g = \frac{1}{2} \bm g^\top (\bm Q^\top + \bm Q) \bm g \), we obtain
\[
(\bm g^k - \bm g^{k+1})^\top \bm Q  \left[ (\bm g^k - \tilde{\bm g}^k) - (\bm g^{k+1} - \tilde{\bm g}^{k+1}) \right] \geq \frac{1}{2} \left\|  (\bm g^k - \tilde{\bm g}^k) - (\bm g^{k+1} - \tilde{\bm g}^{k+1}) \right\|^2_{\bm Q^\top + \bm Q}.
\]
By substituting
\begin{align}\label{cc}
  (\bm{g}^k - \bm{g}^{k+1}) = \bm{M} (\bm{g}^k - \tilde{\bm{g}}^k)  \ \text{and} \  (\bm{g}^{k+1} - \bm{g}^{k+2}) = \bm{M} (\bm{g}^{k+1} - \tilde{\bm{g}}^{k+1})  
\end{align}
(from (\ref{c})) into the left-hand side of the last inequality, and using \( \bm{Q} \bm{M}^{-1} = \bm{H} \), it follows that
\begin{align}\label{cr4}
(\bm{g}^k - \bm{g}^{k+1})^\top \bm{H} \left[ (\bm{g}^k - \bm{g}^{k+1}) - (\bm{g}^{k+1} - \bm{g}^{k+2}) \right] \geq \frac{1}{2} \left\| (\bm{g}^k - \tilde{\bm{g}}^k) - (\bm{g}^{k+1} - \tilde{\bm{g}}^{k+1}) \right\|^2_{\bm{Q}^\top + \bm{Q}}.
\end{align}
Letting \( \bm a = (\bm g^k - {\bm g}^{k+1}) \) and \( \bm b = (\bm g^{k+1} - {\bm g}^{k+2}) \) in the identity (\ref{Id2}), we have
\begin{align*}
\| \bm g^k - {\bm g}^{k+1}\|_{\bm H}^2 - \| \bm g^{k+1} - {\bm g}^{k+2}\|_{\bm H}^2 &= 2 (\bm{g}^k - \bm{g}^{k+1})^\top \bm{H} \left[ (\bm{g}^k - \bm{g}^{k+1}) - (\bm{g}^{k+1} - \bm{g}^{k+2}) \right] \\
& -  \| (\bm{g}^k - \bm{g}^{k+1}) - (\bm{g}^{k+1} - \bm{g}^{k+2})  \|_{\bm H}^2. 
\end{align*}
By inserting (\ref{cr4}) into the first term on the right-hand side of the last equality, we obtain
\begin{align*}
\| \bm g^k - {\bm g}^{k+1}\|_{\bm H}^2 - \| \bm g^{k+1} - {\bm g}^{k+2}\|_{\bm H}^2 & = \left\| (\bm{g}^k - \tilde{\bm{g}}^k) - (\bm{g}^{k+1} - \tilde{\bm{g}}^{k+1}) \right\|^2_{\bm{Q}^\top + \bm{Q}} \\
 & -  \| (\bm{g}^k - \bm{g}^{k+1}) - (\bm{g}^{k+1} - \bm{g}^{k+2})  \|_{\bm H}^2. 
\end{align*}
By inserting (\ref{cc}) into the second term on the right-hand side of the last equality, we get 
\begin{align*}
\| \bm g^k - {\bm g}^{k+1}\|_{\bm H}^2 - \| \bm g^{k+1} - {\bm g}^{k+2}\|_{\bm H}^2 & = \left\| (\bm{g}^k - \tilde{\bm{g}}^k) - (\bm{g}^{k+1} - \tilde{\bm{g}}^{k+1}) \right\|^2_{\bm{Q}^\top + \bm{Q}} \\
 & -  \| \bm M (\bm{g}^k - \tilde{\bm{g}}^k)  - \bm M (\bm{g}^{k+1} - \tilde{\bm{g}}^{k+1})  \|_{\bm H}^2. 
\end{align*}
Since \( \bm{G} = (\bm{Q}^\top + \bm{Q}) - \bm{M}^\top \bm{H} \bm{M} \) is a positive-definite matrix, the assertion in (\ref{res2}) follows immediately.
\end{proof}

Note that \(\bm{Q}^\top + \bm{Q}\) and \(\bm{G} = (\bm{Q}^\top + \bm{Q}) - \bm{M}^\top \bm{H} \bm{M}\) are positive-definite matrices. Together with Proposition \ref{prop41}, there exists a constant \(c_0 > 0\) such that
\begin{align}\label{p21}
\|\bm{g}^{k+1} - \bm{g}^*\|_{\bm{H}}^2 \leq \|\bm{g}^k - \bm{g}^*\|_{\bm{H}}^2 - c_0 \|\bm{M}(\bm{g}^k - \tilde{\bm{g}}^k)\|_{\bm{H}}^2, \quad \forall \bm{g}^* \in \mathcal{V}^*.
\end{align}
 Proposition \ref{prop2} and (\ref{cc}) indicate that
\begin{align}\label{p22}
\| \bm{M} (\bm{g}^{k+1} - \tilde{\bm{g}}^{k+1}) \|_{\bm{H}}^2 \leq \| \bm{M} (\bm{g}^{k} - \tilde{\bm{g}}^{k}) \|_{\bm{H}}^2.
\end{align}
Now, with (\ref{p21}) and (\ref{p22}), we can establish the worst-case \( \mathcal{O}\left({1}/{K}\right) \)  convergence rate in a nonergodic sense for QPADM-slack(GB).

\begin{thm}
The sequence \(\{ \bm{g}^k \}\) generated by the QPADM-slack(GB) algorithm (including Algorithm \ref{alg1} and Algorithm \ref{alg3}) satisfies, for any positive integer \(K > 0\),
\begin{align}\label{p25}
\|\bm g^K - \bm g^{K+1} \|_{\bm{H}}^2 \leq \frac{1}{ c_0\left(K+1\right)}  \|\bm g^0 - \bm g^*\|_{\bm{H}}^2.
\end{align}
\end{thm}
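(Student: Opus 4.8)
The plan is to derive the non-ergodic $\mathcal{O}(1/K)$ rate by combining the two facts assembled immediately before the statement — the contraction inequality (\ref{p21}) and the monotonicity inequality (\ref{p22}) — through the standard telescoping-plus-monotonicity argument used for ADMM-type schemes. No new structural result is needed; everything reduces to the propositions already proved.

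First I would fix any $\bm g^* \in \mathcal{V}^*$ (for instance the limit $\bm g^\infty$ produced by the global convergence theorem) and sum (\ref{p21}) over $k = 0, 1, \dots, K$. The left-hand side telescopes, leaving
\[
c_0 \sum_{k=0}^{K} \|\bm M(\bm g^k - \tilde{\bm g}^k)\|_{\bm H}^2 \;\le\; \|\bm g^0 - \bm g^*\|_{\bm H}^2 - \|\bm g^{K+1} - \bm g^*\|_{\bm H}^2 \;\le\; \|\bm g^0 - \bm g^*\|_{\bm H}^2 ,
\]
where the last bound drops the nonnegative term $\|\bm g^{K+1} - \bm g^*\|_{\bm H}^2$. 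Next I would invoke the monotonicity (\ref{p22}): the scalar sequence $\{\|\bm M(\bm g^k - \tilde{\bm g}^k)\|_{\bm H}^2\}_{k\ge 0}$ is nonincreasing, so its $K$-th term is dominated by the average of the first $K+1$ terms, i.e. $(K+1)\,\|\bm M(\bm g^K - \tilde{\bm g}^K)\|_{\bm H}^2 \le \sum_{k=0}^{K} \|\bm M(\bm g^k - \tilde{\bm g}^k)\|_{\bm H}^2$. Combining this with the displayed bound gives
\[
\|\bm M(\bm g^K - \tilde{\bm g}^K)\|_{\bm H}^2 \;\le\; \frac{1}{c_0 (K+1)}\, \|\bm g^0 - \bm g^*\|_{\bm H}^2 .
\]
Finally I would substitute the identity $\bm g^K - \bm g^{K+1} = \bm M(\bm g^K - \tilde{\bm g}^K)$ from (\ref{cc}) into the left-hand side, obtaining exactly (\ref{p25}).

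Since essentially all of the analytic work lives in Propositions \ref{prop41} and \ref{prop2}, which are assumed here, the remaining obstacle is minor: it is chiefly a matter of stating carefully why the constant $c_0 > 0$ in (\ref{p21}) exists. This follows by comparing the $\bm G$-norm of $\bm g^k - \tilde{\bm g}^k$ appearing in Proposition \ref{prop41} with the $\bm H$-norm of $\bm M(\bm g^k - \tilde{\bm g}^k)$: because $\bm G$ and $\bm H$ are positive definite (for $\nu\in(0,1)$) and $\bm M$ acts nonsingularly on the relevant block, the quadratic forms $\bm w \mapsto \|\bm w\|_{\bm G}^2$ and $\bm w \mapsto \|\bm M\bm w\|_{\bm H}^2$ are comparable, so $\|\bm M\bm w\|_{\bm H}^2 \le c_1 \|\bm w\|_{\bm G}^2$ for some $c_1>0$, which yields (\ref{p21}) with $c_0 = 1/c_1$. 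One should also note the bound is meaningful because $\|\bm g^0 - \bm g^*\|_{\bm H}^2$ is finite. Beyond these bookkeeping points, I expect no real difficulty in the argument.
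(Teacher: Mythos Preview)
Your proposal is correct and follows essentially the same telescoping-plus-monotonicity argument as the paper's proof: sum the contraction inequality (\ref{p21}), use the nonincreasing property (\ref{p22}) to bound the last term by the average, and substitute the correction identity (\ref{cc}). Your indexing is in fact slightly cleaner than the paper's (which bounds the $(K{+}1)$-st term rather than the $K$-th and then makes a small index slip in the final substitution), and your explicit justification for the existence of $c_0$ via comparability of the $\bm G$- and $\bm M^\top\bm H\bm M$-quadratic forms fills in a detail the paper only asserts.
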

\begin{proof}
First, it follows from (\ref{p21}) that
\begin{align}\label{p23}
\sum_{k=0}^{\infty} c_0 \| \bm{M} (\bm{g}^k - \tilde{\bm{g}}^k) \|_{\bm{H}}^2 \leq \| \bm{g}^0 - \bm{g}^* \|_{\bm{H}}^2, \quad \forall \bm{g}^* \in \mathcal V^*.
\end{align}
According to  (\ref{p22}), the sequence \(\{ \| \bm{M}(\bm{g}^k - \tilde{\bm{g}}^k) \|_{\bm{H}}^2 \}\) is monotonically non-increasing. Therefore, we have
\begin{align}\label{p24}
(K+1)\| \bm{M}(\bm{g}^{K+1} - \tilde{\bm{g}}^{K+1}) \|_{\bm{H}}^2 \leq \sum_{k=0}^{K} \| \bm{M}(\bm{g}^k - \tilde{\bm{g}}^k) \|_{\bm{H}}^2.
\end{align}
It  follows from(\ref{p23}) and (\ref{p24}) that \begin{align}\label{p26}
c_0 (K+1) \| \bm{M}(\bm{g}^{K+1} - \tilde{\bm{g}}^{K+1}) \|_{\bm{H}}^2  \leq \| \bm{g}^0 - \bm{g}^* \|_{\bm{H}}^2
\end{align}
The assertion (\ref{p25}) follows from (\ref{p26}) and $\bm M (\bm{g}^{K+1} - \tilde{\bm{g}}^{K+1}) = (\bm g^K - \bm g^{K+1})$ immediately.
\end{proof}

\section{Supplementary Experiments}\label{4A}
Here, we will supplement numerical experiments on nonconvex regularization regression and classification problems.
Note that when employing algorithms QPADM-slack(GB) and M-QPADM-slack(GB) to solve the problem of nonconvex regularizations, the LLA method (referred to as Algorithm \ref{alg2}) is actually used. Consequently, the number of iterations (Ite) corresponds to the total number of iterations generated within Algorithm \ref{alg2}. Typically, Algorithm \ref{alg2} involves two ADMM  iteration processes: one for solving the unweighted $\ell_1$ regularization problem and another for solving the weighted $\ell_1$ regularization problem.  This counting method was also adopted by \cite{Wen2024Nonconvex} in their ADMM algorithms.
We have implemented warm-start technique (\cite{Friedman2010Regularization}), which allows the second ADMM algorithm to converge rapidly, typically requiring only around ten to twenty iterations in most cases.

\subsection{Supplementary Experiments for Section \ref{sec451}}\label{4A1}
In this section, we first supplement the experiments with nonconvex regularization terms for Table \ref{tab41} (non-parallel computing environment), where the dimensions are chosen as $(n, p) = (1000, 30000)$ and $(30000, 1000)$. The specific numerical results are provided in Table \ref{tab44}.  
The numerical results indicate that when solving the quantile regression problem with MCP and SCAD penalties, the ADMM algorithm with Gaussian back substitution still outperforms the one without it. Additionally, it is observed that the number of iterations required by algorithms QPADM-slack(GB) and M-QPADM-slack(GB) for these two nonconvex penalized quantile regressions is no more than 20 steps greater than that required for solving the quantile regression with $\ell_1$ penalty. This advantage is attributed to the warm-start technique, where the solution obtained from the $\ell_1$ penalized regularization serves as the initial solution for the weighted $\ell_1$ penalty regularization.
\begin{table}[H]
\caption{Comparison results of ADMM algorithms under  nonconvex penalties.}\label{tab44}
\renewcommand{\arraystretch}{1.5}
\resizebox{1\linewidth}{!}{
\begin{threeparttable}
\begin{tabular}{lcccccccccc}
\hline
MCP &  \multicolumn{5}{c}{$(n,p) = (30000,1000)$}  &\multicolumn{5}{c}{$(n,p) =(1000,30000)$}  \\ 
\cmidrule(lr){2-6}\cmidrule(lr){7-11}
 &  P1 &  P2 &  AE   &  Ite   &  Time  &  P1 &  P2 &  AE   &  Ite   &  Time \\ \hline
QPADM        & 100 & 100 & \textbf{0.014(0.03)} & 138.1(14.5)& 41.35(6.78) & 82 & 100 & 7.532(1.12) & 500+(0.0)  & 2038.7(68.6)  \\
QPADMslack   & 100 & 100 & 0.051(0.05) & 68.3(6.69)& 37.56(5.27) & 89 & 100 & 7.807(1.33) & 266(28.6) & 1276.5(30.6)  \\
QPADM-slack(GB)      & 100 & 100 & 0.025(0.04) & 52.1(5.44)  & 35.17(5.30) & 93 & 100 & 7.352(1.08) & 241(22.8)  & 1101.4(27.6) \\
M-QPADM-slack(GB)       & 100 & 100 & 0.017(0.03) & \textbf{39.5(4.78)} & \textbf{24.42(3.08)} & \textbf{96} & 100 & \textbf{7.082(0.92)} & \textbf{172(16.3)} & \textbf{885.7(19.4)} \\
\midrule[1pt]
 SCAD & \multicolumn{5}{c}{$(n,p) = (30000,1000)$} & \multicolumn{5}{c}{$(n,p) =(1000,30000)$} \\ 
\cmidrule(lr){2-6}\cmidrule(lr){7-11}
 &  P1 &  P2 &  AE   &  Ite     &  Time  &  P1 &  P2 &  AE   &  Ite     &  Time\\ \hline
QPADM  & 100 & 100 & \textbf{0.012(0.03)} & 142.3(14.9)  & 44.2(6.91) &85 & 100 & 7.233(1.02) & 500+(0.0)   & 2056.2(67.9)\\
QPADMslack  & 100 & 100 & 0.056(0.05) & 65.2(6.22)  & 34.78(5.10)  &88 & 100 & 7.886(1.38) & 251(24.6)  & 1202.5(28.9) \\
QPADM-slack(GB)   &100& 100 & 0.026(0.04) & 53.0(5.61)  & 37.09(5.5) &94 & 100 & 7.426(1.12) & 261(25.7)   & 1208.6(29.6)\\
M-QPADM-slack(GB)   &{100}& 100 & {0.015(0.03)} & \textbf{40.9(4.91)} & \textbf{26.32(3.17)} &\textbf{98} & 100 & \textbf{7.021(0.89)} &\textbf{ 175(16.7)}  & \textbf{892.2(20.2)} \\ \hline
\end{tabular}
\begin{tablenotes}
        \footnotesize
        \item[*] The symbols in this table are defined as follows: P1 (\%) represents the proportion of times $x_1$ is selected; P2 (\%) represents the proportion of times $x_6$, $x_{12}$, $x_{15}$, and $x_{20}$ are selected; AE denotes the absolute estimation error; Ite indicates the number of iterations; and Time (s) refers to the running time. The numbers in parentheses represent the corresponding standard deviations, and the optimal results are highlighted in bold.
\end{tablenotes}
\end{threeparttable}}
\end{table}

Next,  we  supplement the experiments with nonconvex regularizer terms for Table \ref{tab42} (parallel computing environment), where the dimensions are chosen as $(n, p) = (500000, 1000)$. The specific numerical results are provided in Table \ref{tab45}, which demonstrate that the Gaussian back substitution technique effectively enhances both the accuracy and efficiency of the QPADM-slack algorithm.
\begin{table}[H]
\caption{Comparison  results of different P-ADMM algorithms under nonconvex penalties.}\label{tab45}
\renewcommand{\arraystretch}{1.5}
\resizebox{1\columnwidth}{!}{
\begin{threeparttable}
\begin{tabular}{lcccccccc}
\hline
  & \multicolumn{2}{l}{QPADM} & \multicolumn{2}{r}{MCP} & \multicolumn{2}{l}{QPADM} & \multicolumn{2}{r}{SCAD}\\ 
\cmidrule(lr){2-5}\cmidrule(lr){6-9}
M & Nonzero &  AE   &  Ite   &  Time  & Nonzero &  AE   &  Ite   &  Time   \\ \hline
 5  & 26.5(2.01) & {0.034(0.0005)} & 451.3(37.5) & 180.4(15.02) & 27.1(1.99) & 0.035(0.0005) & 479.2(39.2) & 189.3(14.3)  \\
10  & 27.2(2.09) & 0.039(0.0006) & 475.7(40.1) & 98.3(6.99) & 28.5(2.07) & 0.040(0.0006) & 496.3(42.7) & 99.2(7.12) \\
20 & 29.7(2.16) & 0.045(0.0008) & 500+(0.00) & 49.8(4.11) & 29.9(2.18) & 0.047(0.0007) & 500+(0.00) & 50.2(4.03)  \\\hline
   & \multicolumn{2}{l}{QPADM-slack}&\multicolumn{2}{r}{MCP} & \multicolumn{2}{l}{QPADM-slack} & \multicolumn{2}{r}{SCAD} \\  \cmidrule(lr){2-5}\cmidrule(lr){6-9}
M  &  Nonzero &  AE           &  Ite        &  Time & Nonzero & AE & Ite & Time \\\hline
 5  & 25.1(1.97) & 0.037(0.0006) & 332.5(30.0) & 141.2(9.66) & 24.7(1.95) & 0.038(0.0006) & 341.6(29.8) & 136.4(9.52)  \\
 10 & 29.1(2.06) & 0.044(0.0007) & 360.1(34.5) & 75.2(5.23) & 28.9(2.11) & 0.043(0.0007) & 366.8(32.3) & 76.6(5.15)   \\
20 & 31.1(2.17) & 0.050(0.0008) & 431.1(41.7) & {42.1(2.52)} & 30.1(2.23) & 0.049(0.0008) & 425.9(40.2) & 43.2(2.74)  \\\hline
   & \multicolumn{2}{l}{QPADM-slack(GB) } & \multicolumn{2}{r}{MCP} & \multicolumn{2}{l}{QPADM-slack(GB)} & \multicolumn{2}{r}{SCAD}\\   \cmidrule(lr){2-5}\cmidrule(lr){6-9}
M  &  Nonzero &  AE           &  Ite        &  Time  & Nonzero & AE & Ite & Time \\\hline
 5  & {23.5(1.91)} & 0.028(0.0005) & {290.1(24.8)}   & {87.2(6.53)} & {23.1(1.88)} & {0.031(0.0005)} & {271.2(25.2)} & {88.9(6.81)}    \\
 10 & {24.4(1.98)} & {0.032(0.0005)} &{263.0(22.4)}  & {55.1(3.99)}  & {24.2(1.91)} & {0.035(0.0005)} & {269.3(23.5)} & {54.6(4.02)}  \\
20 & {25.1(2.03)} & {0.035(0.0006)} & {291.2(27.0)}   & {39.5(2.35)}  & {25.3(2.00)} & {0.038(0.0006)} & {283.4(27.6)} & {37.3(2.21)}  \\\hline
 & \multicolumn{2}{l}{M-QPADM-slack(GB)  } & \multicolumn{2}{r}{MCP} & \multicolumn{2}{l}{M-QPADM-slack(GB)} & \multicolumn{2}{r}{SCAD}\\  \cmidrule(lr){2-5}\cmidrule(lr){6-9}
M  &  Nonzero &  AE           &  Ite        &  Time  & Nonzero & AE & Ite & Time  \\\hline
 5  & \bf{14.10(1.40)} & \bf{0.020(0.0004)} & \bf{228.1(14.9)}   & \bf{71.1(4.90)}  & \bf{14.05(1.37)} & \bf{0.022(0.0004)} & \bf{215.3(15.1)} & \bf{70.5(4.84)}  \\
 10 & \bf{14.5(1.47)} & \bf{0.028(0.0005)} &\bf{232.2(17.83)}  & \bf{40.1(2.73)}  & \bf{14.77(1.48)} & \bf{0.027(0.0005)} & \bf{219.6(16.2)} & \bf{39.2(2.77)}  \\
20 & \bf{15.3(1.51)} & \bf{0.034(0.0005)} & \bf{230.7(13.8)}   & \bf{22.0(1.55)}  & \bf{15.01(1.54)} & \bf{0.031(0.0005)} & \bf{221.3(14.1)} & \bf{21.3(1.59)}  \\\hline
\end{tabular}
\begin{tablenotes}
        \footnotesize
        \item[*] Since the values of all methods for metrics P1 and P2 are 100, they are not listed in Table \ref{tab45}. ``Nonzero" indicates the number of non-zero coefficients in the estimates. The numbers in parentheses represent the corresponding standard deviations, and the optimal results are shown in bold.
\end{tablenotes}
\end{threeparttable}}
\end{table}

\subsection{Supplementary Experiments for Section \ref{sec452} }\label{4A2}
In this section, we provide additional experiments involving nonconvex regularization terms for the data presented in Table \ref{tab46} and Table \ref{tab47}. All experimental settings remain consistent with those described in Section \ref{sec452}. The numerical results in both tables demonstrate that incorporating Gaussian back substitution steps into the QPADM algorithm significantly enhances the sparsity and accuracy of nonconvex SVM classifiers.

\begin{table}[!ht]\footnotesize
    \centering
    \renewcommand{\arraystretch}{1.5}
    \caption{Comparative analysis of Sparsity, training, and average testing accuracies (\%) for SCAD-SVM using M-QPADM-slack(GB), QPADM-slack, and QPADM-slack(GB)}
    \begin{tabular}{llllllllll}
    \hline
          & \multicolumn{3}{c}{M-QPADM-slack(GB)} & \multicolumn{3}{c}{QPADM-slack(GB)} & \multicolumn{3}{c}{QPADM-slack}\\ 
        \cmidrule(lr){2-4}\cmidrule(lr){5-7}\cmidrule(lr){8-10}
        M & Sparsity & Train & Test & Sparsity & Train & Test & Sparsity & Train & Test \\ \hline
        2 & \bf{92.00} & \bf{99.80} & \bf{97.80} & 87.50 & 95.50 & 93.50 & 85.00 & 94.00 & 92.50 \\ 
        4 & \bf{92.00} & \bf{99.60} & \bf{97.50} & 85.50 & 94.80 & 92.50 & 82.00 & 93.00 & 91.50 \\ 
        6 & \bf{92.00} & \bf{99.50} & \bf{97.40} & 84.50 & 94.00 & 91.60 & 81.00 & 92.00 & 90.60 \\ 
        8 & \bf{91.80} & \bf{99.20} & \bf{97.10} & 83.00 & 93.20 & 91.00 & 80.00 & 91.20 & 90.00 \\ 
        10 & \bf{91.60} & \bf{99.00} & \bf{96.90} & 82.00 & 92.80 & 90.20 & 79.00 & 90.80 & 89.20 \\ 
        12 & \bf{91.50} & \bf{98.80} & \bf{96.70} & 81.50 & 92.00 & 89.50 & 77.50 & 90.00 & 88.50 \\ 
        14 & \bf{91.40} & \bf{98.70} & \bf{96.60} & 80.50 & 91.30 & 88.70 & 76.50 & 89.80 & 87.80 \\ 
        16 & \bf{91.20} & \bf{98.50} & \bf{96.40} & 79.50 & 90.50 & 87.50 & 74.50 & 88.50 & 87.20 \\ 
        18 & \bf{91.10} & \bf{98.40} & \bf{96.30} & 78.50 & 89.80 & 86.50 & 73.50 & 87.80 & 86.30 \\ 
        20 & \bf{91.00} & \bf{98.30} & \bf{96.20} & 77.50 & 88.80 & 85.50 & 72.50 & 87.70 & 85.20 \\ \hline
    \end{tabular}
    \label{tab46}
\end{table}

\begin{table}[!ht]\footnotesize
    \centering
    \renewcommand{\arraystretch}{1.5}
    \caption{Comparative analysis of Sparsity, training, and average testing accuracies (\%) for MCP-SVM using M-QPADM-slack(GB), QPADM-slack, and QPADM-slack(GB)}
    \begin{tabular}{llllllllll}
    \hline
          & \multicolumn{3}{c}{{M-QPADM-slack(GB)}} & \multicolumn{3}{c}{{QPADM-slack(GB)}} & \multicolumn{3}{c}{{QPADM-slack}}\\ 
        \cmidrule(lr){2-4}\cmidrule(lr){5-7}\cmidrule(lr){8-10}
       {M} & {Sparsity} & {Train} & {Test} &{Sparsity} & {Train} & {Test} & {Sparsity} & {Train} & {Test} \\ \hline
        2 & \bf{92.20} & \bf{99.90} & \bf{97.90} & 87.30 & 95.40 & 93.40 & 84.80 & 93.80 & 92.40 \\ 
        4 & \bf{92.10} & \bf{99.70} & \bf{97.60} & 85.30 & 94.70 & 92.40 & 81.80 & 92.90 & 91.40 \\ 
        6 & \bf{92.00} & \bf{99.60} & \bf{97.50} & 84.30 & 93.90 & 91.50 & 80.80 & 91.90 & 90.50 \\ 
        8 & \bf{91.90} & \bf{99.30} & \bf{97.20} & 82.80 & 93.10 & 90.90 & 79.80 & 91.10 & 89.90 \\ 
        10 & \bf{91.70} & \bf{99.10} & \bf{97.00} & 81.80 & 92.70 & 90.10 & 78.80 & 90.70 & 89.10 \\ 
        12 & \bf{91.60} & \bf{98.90} & \bf{96.80} & 81.30 & 91.90 & 89.40 & 77.30 & 89.90 & 88.40 \\ 
        14 & \bf{91.50} & \bf{98.80} & \bf{96.70} & 80.30 & 91.20 & 88.60 & 76.30 & 89.70 & 87.70 \\ 
        16 & \bf{91.30} & \bf{98.60} & \bf{96.50} & 79.30 & 90.40 & 87.40 & 74.30 & 88.40 & 87.10 \\ 
        18 & \bf{91.20} & \bf{98.50} & \bf{96.40} & 78.30 & 89.70 & 86.40 & 73.30 & 87.70 & 86.20 \\ 
        20 & \bf{91.10} & \bf{98.40} & \bf{96.30} & 77.30 & 88.70 & 85.40 & 72.30 & 87.60 & 85.10 \\ \hline
    \end{tabular}
    \label{tab47}
\end{table}

\end{document}